\documentclass[journal]{IEEEtran}
\usepackage{cite}
\ifCLASSINFOpdf
\else
\fi

\usepackage{array}
\usepackage{algorithm}
\usepackage{algpseudocode}
\usepackage{amsmath}
\usepackage{amssymb}
\usepackage{booktabs}
\usepackage{relsize}
\usepackage{mathtools}
\usepackage{siunitx}
\usepackage{makecell}
\usepackage{tabularx}
\usepackage{tikz}
\usepackage{subcaption}

\newcommand{\etal}{\emph{et al.}\@}

\usepackage{amsthm}
\newtheorem{lemma}{Lemma}
\newtheorem{theorem}{Theorem}
\newtheorem{remark}{Remark}
\title{BC-DIR: Bandit-Controlled Deadline-Aware Incremental Redundancy for QUIC in V2X Networks}

\author{Weihao~Liu,~\IEEEmembership{Student Member,~IEEE,}
        He~Sun,~\IEEEmembership{Member,~IEEE,}
        and~Chen-Khong~Tham,~\IEEEmembership{Senior Member,~IEEE}
\thanks{Weihao Liu and Chen-Khong Tham are with the Department of Electrical and Computer Engineering, National University of Singapore, Singapore 117583 (e-mail: liuweihao@u.nus.edu; eletck@nus.edu.sg).}%
\thanks{He Sun is with the School of Electronic and Information Engineering, Beihang University, Beijing 100191, China (e-mail: sunhe1710@buaa.edu.cn).}%
}

\begin{document}
\maketitle

\begin{abstract}
Vehicle-to-Everything (V2X) communications require timely and reliable message delivery under highly dynamic wireless conditions. Existing approaches that integrate forward error correction (FEC) into QUIC rely mainly on proactive redundancy and fall back to retransmission once losses exceed the correction capability of the configured code, leading to inefficient recovery under burst loss and unnecessary overhead when network conditions are favorable. This paper presents a Bandit-Controlled Deadline-Aware Incremental Redundancy (BC-DIR) framework for QUIC-based V2X transport. BC-DIR combines rateless coding with a soft decoding deadline and a redundancy margin, enabling repair to be triggered within the available delivery budget while injecting additional repair symbols beyond the immediate deficit to improve recovery under burst loss. A contextual bandit controller further adapts the redundancy configuration online according to end-to-end feedback. We also develop a deadline-constrained reliability analysis under burst loss, showing the advantage of the proposed repair mechanism over conventional retransmission and the existence of an optimal redundancy margin. Monte Carlo simulations validate the analytical results. BC-DIR is implemented in a QUIC-based transport stack and evaluated in Veins/OMNeT++ under both congested urban V2X scenarios and stable network conditions. Experimental results show that, across different traffic congestion levels, BC-DIR improves completion ratio by 10\%--40\% over benchmark schemes in congested V2X scenarios, while under favorable network conditions it can even reduce overhead by about 1\% compared with native QUIC.
\end{abstract}

\begin{IEEEkeywords}
QUIC, forward error correction, RaptorQ, bandit control, adaptive code rate, V2X
\end{IEEEkeywords}

%
\IEEEpeerreviewmaketitle

\section{Introduction}
\label{sec:intro}

V2X communication is a key enabler for intelligent transportation systems and autonomous driving~\cite{hejazi2022survey}. By allowing vehicles to exchange perception data, map updates, and cooperative control information, V2X extends sensing beyond line-of-sight (LoS) and improves both safety and traffic efficiency. However, reliable and timely data delivery in V2X networks remains challenging~\cite{gyawali2020challenges}. In dense urban environments, frequent topology changes, wireless interference, and channel contention cause bursty packet losses, while many V2X applications are highly delay-sensitive~\cite{wang2019survey,eckermann2019performance}. These characteristics make transport design in V2X fundamentally different from that in conventional wired networks.

Conventional transport protocols are not well suited to this setting. Transmission Control Protocol (TCP) provides reliability through retransmission and congestion control, but its head-of-line blocking and retransmission delay can significantly degrade timeliness under lossy wireless conditions~\cite{kyratzis2022quic}. User Datagram Protocol (UDP) avoids retransmission delay but provides no reliability guarantee. Quick UDP Internet Connections (QUIC) has recently emerged as a promising alternative, combining congestion control, multiplexing, and encryption on top of UDP while reducing connection establishment latency and mitigating application-layer head-of-line blocking~\cite{amponis2024channel,suleman2026simulated}. These properties make QUIC attractive for delay-sensitive wireless scenarios. Nevertheless, QUIC still ensures reliability through loss detection and retransmission~\cite{vu2020latency}, which may consume additional round-trip time (RTT) and cause deadline misses under burst loss.

FEC can alleviate this problem by enabling local recovery without waiting for retransmissions~\cite{mathis1996forward,polardeletion}. However, FEC inevitably introduces extra redundancy overhead. When channel conditions are good, it leads to direct bandwidth waste, which limits the practical use of proactive FEC. As a result, even in the presence of losses, retransmission-based recovery is often preferred in practice because it avoids persistent redundancy cost~\cite{9756315,sun2024fast}.

Existing QUIC-FEC approaches improve reliability by proactively transmitting repair symbols and, in some cases, adapting the coding rate to changing network conditions~\cite{garrido2019rquic,zhang2025qrst,michel2019quicfec,sidhu2026tarot}. However, their protection remains fundamentally proactive: even when channel conditions are favorable, FEC symbols still need to be transmitted, which introduces persistent bandwidth overhead. Moreover, once packet losses exceed the correction capability of the configured FEC block, recovery falls back to QUIC's native automatic repeat request (ARQ) mechanism, incurring additional retransmission cycles and delay inflation. These designs implicitly treat loss recovery as a best-effort process, without explicitly controlling when repair should be triggered and how aggressively redundancy should be injected under a deadline constraint. As a result, they fail to fully utilize the limited repair opportunities available before the deadline, especially under burst losses.

To address these limitations, we propose BC-DIR for V2X transport. BC-DIR targets deadline-constrained block recovery by jointly coordinating the initial FEC configuration and the subsequent redundant repair process. Unlike conventional FEC schemes that rely on either initial redundancy or reactive retransmission, DIR introduces a soft decoding deadline to explicitly regulate repair timing, ensuring that recovery actions are aligned with the remaining delivery budget. In particular, BC-DIR introduces a redundancy margin in each repair round, which proactively injects additional repair symbols beyond the immediate deficit to improve the probability of successful decoding under burst losses. This additional redundancy increases the probability that the current block can be recovered within the available deadline budget. The overall redundancy configuration is further adjusted online according to observed network conditions. Under severe burst loss, BC-DIR exploits the joint effect of initial FEC protection and the DIR process to improve deadline-constrained recovery. When channel conditions are favorable, it suppresses unnecessary proactive redundancy and relies on lightweight redundant repair to handle residual losses. In this way, BC-DIR achieves an effective balance between completion probability and transmission overhead under dynamic V2X conditions. The main contributions of this work are summarized as follows:

\begin{itemize}
\item We propose a deadline-aware transport mechanism for V2X networks that introduces a soft decoding deadline to regulate when repair is triggered, so that block-level recovery remains aligned with the end-to-end delivery deadline. The mechanism further incorporates a redundancy margin, which intentionally injects additional repair symbols beyond the immediate deficit in each repair round to improve recovery probability under burst loss. A bandit controller is used to adapt the redundancy configuration online from end-to-end feedback, allowing the proposed design to balance completion ratio and redundancy overhead under dynamic V2X conditions.

\item We develop a theoretical framework for deadline-constrained reliability under burst loss and analytically characterize the advantage of DIR over conventional retransmission. In particular, we derive the one-round repair success model and establish a sufficient condition under which DIR achieves higher completion probability than ARQ, and show the existence of an optimal redundancy margin under a hard deadline. These results provide the analytical basis for online redundancy adaptation.

\item We implement BC-DIR in a QUIC-based transport stack and integrate it into a vehicular network simulation framework. Extensive evaluations under multiple traffic densities and network conditions show that BC-DIR consistently outperforms benchmark methods in deadline-constrained completion probability. The performance gain becomes more significant as vehicle density increases and the wireless channel becomes more bursty and congested, while under favorable network conditions BC-DIR still maintains competitive bandwidth efficiency.
\end{itemize}

The remainder of this paper is organized as follows. Section~\ref{sec:related} reviews related work on QUIC-based transport and FEC mechanisms in wireless networks. Section~\ref{sec:system} presents the system model. Section~\ref{sec:bandit_ir_quic} details the design of BC-DIR. Section~\ref{sec:evaluation} evaluates its performance under various vehicular scenarios. Section~\ref{sec:conclusion} concludes the paper.

\section{Related Work}
\label{sec:related}

QUIC standardized by the IETF~\cite{rfc9000} detects losses using time-based and packet-threshold mechanisms, and recovers them through retransmission. While this design ensures reliability, each loss event may incur at least one additional RTT before recovery, which can significantly inflate latency in lossy wireless environments.

To reduce retransmission delay, both the IETF community~\cite{swett-nwcrg-coding-for-quic-04} and industrial organizations~\cite{deconinck2019pluginizing,garrido2019rquic,michel2019quicfec} have explored integrating FEC into QUIC. By proactively transmitting repair symbols together with source data, these approaches enable local loss recovery without waiting for explicit retransmission feedback. Early designs considered simple XOR and Reed-Solomon (RS) based coding~\cite{zheng-quic-fec-extension-01}. Coninck~\etal~\cite{deconinck2019pluginizing} proposed Pluginized QUIC (PQUIC), a plugin-based QUIC framework that supports extensible protocol behaviors and implements Google’s earlier XOR-based FEC design. Michel~\etal~\cite{michel2019quicfec} further incorporated multiple coding schemes, including random linear coding (RLC) and RS, into this framework. Later, Michel~\etal~\cite{michel2023flec} introduced Flexible Erasure Correction (FlEC), which apply adaptive FEC-rate adjustment, and QUIC Integrated Reliability Layers (QUIRL)~\cite{michel2024quirl} further improved deployability in practical environments. However, these studies mainly focus on independent and identically distributed (i.i.d) loss or large-file transfer scenarios. Their protection is typically configured proactively, and robust performance under burst-loss-dominated, deadline-sensitive small-block transmissions remains insufficiently explored.

For V2X applications, whether each block can be delivered before its deadline is often more important than maximizing long-term throughput. Zhang~\etal~\cite{zhang2025qrst} proposed QRST, which combines adaptive FEC with deadline-driven block scheduling to improve the number of blocks delivered on time. Yang~\etal~\cite{yang2024qos} proposed QC-MAB, which applies contextual bandit control to jointly adjust transport-layer parameters. These works demonstrate the value of deadline-aware adaptation, but QRST introduces relatively high redundancy overhead, while QC-MAB primarily targets multipath video streaming and quality-of-service (QoS) provisioning rather than single-path V2X communication with burst losses and stringent per-message deadlines.

The ARQ-based recovery is also known to be inefficient over lossy links, since a lost packet may require multiple feedback rounds before successful delivery. To address this limitation, several studies have explored coded retransmission or incremental-redundancy style recovery at the transport layer. Cloud~\etal~\cite{cloud2015coded} extended selective-repeat ARQ with coded retransmissions to reduce in-order delivery delay. Meng~\etal~\cite{meng2024hairpin} further considered deadline miss rate in interactive video streaming and protected retransmitted symbols with additional FEC. However, because its design is based on a fixed-rate FEC scheme, lost packets must be regrouped and re-encoded in subsequent rounds, which increases system complexity and computational overhead. More generally, these studies show the benefit of coded repair over pure ARQ, but do not explicitly address deadline-constrained block recovery under burst-loss-dominated vehicular channels.

Furthermore, several studies have considered using polar codes at the transport layer to enhance reliability~\cite{pereira2022polar, liu2024polarquic, liu2026lstmpolar}. However, the code length of conventional polar codes is limited to powers of two. Thus, rate matching is required for encoding polar codes with arbitrary code lengths via shortening, puncturing or bit padding, resulting in error-correction performance degradation or rate loss. Sidhu~\etal~\cite{sidhu2026tarot} introduced TAROT, which applies RaptorQ to adaptive video streaming and dynamically adjusts coding parameters to reduce FEC overhead. While TAROT demonstrates the potential of rateless coding, its control strategy is designed for long video segments and does not fully exploit the advantages of rateless repair in short-block, deadline-constrained settings.

Overall, existing QUIC-FEC designs mainly rely on proactive redundancy and typically fall back to conventional retransmission once losses exceed the correction capability of the configured code. In contrast, BC-DIR adopts a repair strategy based on additional FEC symbols rather than retransmission of the original lost packets, enabling more effective deadline-constrained recovery under burst-loss conditions in V2X networks.

\section{System Model}\label{sec:system}

\subsection{V2X Network and FEC System Model}
\label{subsec:system_model}

\begin{figure}[t]
    \centering
    \includegraphics[width=1\columnwidth]{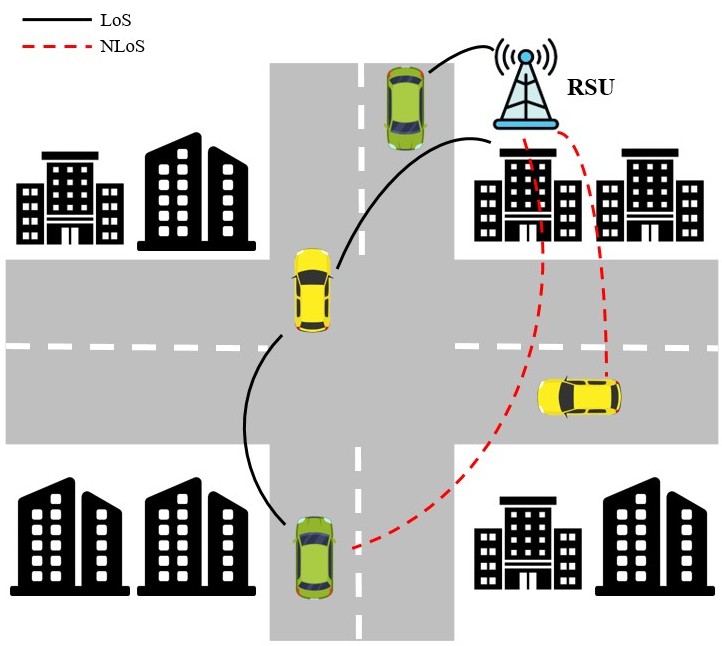}
    \caption{Urban crossroad V2X communication scenario.}
    \label{fig:v2x_scenario}
\end{figure}

We consider a V2X communication system operating in an urban environment characterized by complex road topology, high vehicle mobility, and frequent transitions between LoS and non-line-of-sight (NLoS) conditions, as illustrated in Fig.~\ref{fig:v2x_scenario}. Vehicles communicate with neighboring vehicles or roadside units (RSUs) over a shared wireless medium.

Due to building obstruction and traffic density variation at intersections, wireless impairments exhibit strong temporal correlation and typically manifest as burst losses. Such burst events persist over multiple consecutive packets and significantly impact deadline-constrained data delivery.

To improve reliability under burst losses, we employ systematic RaptorQ codes standardized in RFC~6330~\cite{rfc6330} as the underlying FEC mechanism. Application data is segmented into blocks consisting of $K$ source symbols. For each block, the encoder first transmits the $K$ systematic symbols, followed by repair symbols when redundancy is required. Let $R$ denote the total number of repair symbols transmitted, the total number of symbols transmitted for the block is given by
\[
N = K + R.
\]

At the receiver, decoding succeeds once at least $K+1$ innovative symbols are collected. In practice, successful decoding is achieved with high probability once $K+1$ symbols are received, with decoding failure probability below $10^{-4}$~\cite{lazaro2017inactivation}. This threshold-based property enables incremental redundancy: additional repair symbols can be injected until the decoding condition
is satisfied.

RaptorQ encoding and decoding complexities scale approximately linearly with block size, making it suitable for real-time V2X applications~\cite{rfc6330,lazaro2017inactivation}.

In this work, RaptorQ provides the block-level reliability primitive, while the proposed deadline-aware control framework determines when and how much additional redundancy should be transmitted to satisfy latency constraints under burst-loss conditions.

\subsection{End-to-End Delay Model}
\label{subsec:e2e_delay}

We now characterize the end-to-end delay of a message transmitted over a V2X link. Let $k$ denote the message index. The end-to-end delay of message $k$, denoted by $d_k$, is decomposed into two components:
\begin{equation}
  d_k = d_k^N + d_k^F,
\label{eq:e2edelay}
\end{equation}
where $d_k^N$ represents the network-layer delay incurred during message delivery, and $d_k^F$ captures the additional delay introduced by FEC-based loss recovery and decoding.

The network-layer delay $d_k^N$ aggregates transmission delay, medium access delay, queuing delay, processing delay, and propagation delay along the wireless path. Similar to prior work in mobile and wireless networks, this delay can be approximated by a one-way delay (OWD), which is approximately one-half of the measured RTT, i.e., $OWD \approx \mathrm{RTT}/2$.

Let $x_k$ denote the size of message $k$. Given an effective downlink transmission rate $C$, the wireless transmission delay can be approximated as $x_k / C$. The downlink rate is modeled as
\begin{equation}
  C = \xi W \log_2(1 + \sigma),
\end{equation}
where $W$ denotes the channel bandwidth, $\sigma$ is the received signal-to-interference-plus-noise ratio (SINR), and $\xi \in (0,1)$ captures implementation inefficiencies. Accordingly, the network-layer delay can be written as
\begin{equation}
  d_k^N \approx OWD + \frac{x_k}{C},
\end{equation}
which captures both propagation-related delay and transmission-related delay over the wireless downlink.

The term $d_k^F$ represents the FEC-induced delay arising from block-based recovery. A message is segmented into multiple FEC blocks. Under burst losses, some blocks may require additional repair rounds for feasible decoding, which increases the overall message completion time. The completion time of each message is given by
\begin{equation}
  d_k^F = d_k^C + d_k^D,
\end{equation}
where $d_k^C$ denotes the delay associated with collecting sufficient symbols for successful recovery of all FEC blocks belonging to message $k$, and $d_k^D$ denotes the corresponding FEC decoding delay.

\subsection{Completion Probability}
\label{subsec:completion_probability}

For deadline-constrained V2X services, the key performance objective is whether a message can be delivered before its transmission deadline. Let $T_k$ denote the transmission deadline associated with message $k$. Since each message is segmented into multiple FEC blocks, successful message delivery requires that all corresponding blocks be recovered and decoded within the deadline. Accordingly, the completion probability is defined as
\begin{equation}
P_k
=
\Pr\!\left\{
d_k \le T_k
\right\}.
\label{eq:completion_probability}
\end{equation}
In the following, we further analyze this completion event through the block-level recovery process under burst-loss channel, and derive the corresponding deadline-constrained reliability model in Section~\ref{subsubsec:deadline_reliability}.

\subsection{Transmission Overhead}
\label{subsec:transmission_overhead}

In deadline-constrained V2X transport, redundancy overhead is not limited to repair symbols alone. In addition to FEC repair data, extra transmission cost also arises from protocol headers, control feedback such as NACK messages, and other auxiliary signaling required for reliable recovery.

Let $D_k$ denote the application data size of message $k$, and let $H_k$ denote the total non-payload overhead incurred during the transmission of message $k$, including repair symbols, FEC-related headers, and control messages. The transmission overhead is defined as
\begin{equation}
o_k = \frac{H_k}{D_k}.
\label{eq:overhead_def}
\end{equation}

\section{Bandit-Controlled Deadline-Aware Incremental Redundancy}
\label{sec:bandit_ir_quic}
In this section, we present the design of BC-DIR, which integrates a deadline-aware incremental redundancy mechanism with a bandit-based online control strategy to adapt redundancy configurations from end-to-end transport feedback. BC-DIR builds on the rateless property of RaptorQ to support deadline-aware incremental redundancy in V2X transport. After a decoding failure, recovery is performed by transmitting newly generated repair symbols rather than directly retransmitting the original lost packets. By injecting additional repair symbols in each recovery round, DIR increases the probability of successful decoding under burst loss while preserving flexibility in redundancy allocation.

QUIC remains responsible for congestion control. 
Following RFC~9265~\cite{rfc9265}, loss events are always exposed to congestion control even if they are later repaired by FEC, and repair packets are subject to the same congestion-control constraints as original data packets. 
BC-DIR is implemented over QUIC datagrams~\cite{rfc9221}, with BBRv2 adopted as the underlying congestion controller because it is better suited to wireless loss environments than purely loss-driven schemes~\cite{michel2024quirl}.

\subsection{Online Redundancy Adaptation Problem}

FEC improves the completion probability $P_k$ by increasing the likelihood that all FEC blocks of message $k$ can be recovered before the deadline. At the same time, additional repair symbols increase the transmission overhead $o_k$. Since the reliability gain of redundancy typically exhibits diminishing returns, redundancy control is naturally formulated as a tradeoff between completion probability and overhead.

A fixed or offline-configured redundancy strategy is generally inadequate in the considered V2X scenario, because burst loss, mobility-induced channel variation, and medium-access contention vary over time. Accordingly, the redundancy configuration should be adjusted according to the changing network condition. This leads to an online sequential decision problem in which an action is selected at the beginning of each round and its transport outcome is then observed through end-to-end feedback.

Reinforcement learning (RL) provides a general framework for such sequential decision problems~\cite{zhu2023transfer}. However, a full RL formulation is most useful when actions significantly affect future state transitions and long-term action-dependent dynamics must be explicitly modeled. In the V2X scenario, the dominant dynamics are mainly driven by external factors, and are only weakly influenced by the selected redundancy action. The action therefore mainly affects the immediate reward of the current round, rather than the future network state.

For this reason, a contextual bandit formulation is more suitable than a full Markov decision process. Instead of learning long-horizon state transitions, the controller only needs to learn which redundancy action yields the highest expected reward under the currently observed context~\cite{bubeck2012regret}. Motivated by this problem structure, we adopt a contextual multi-armed bandit to adapt redundancy configurations from end-to-end transport feedback.

\subsection{Deadline-Aware Incremental Redundancy Mechanism}
\label{subsec:ir_fec_mechanism}

\begin{figure}[t]
    \centering
    \includegraphics[width=1\columnwidth]{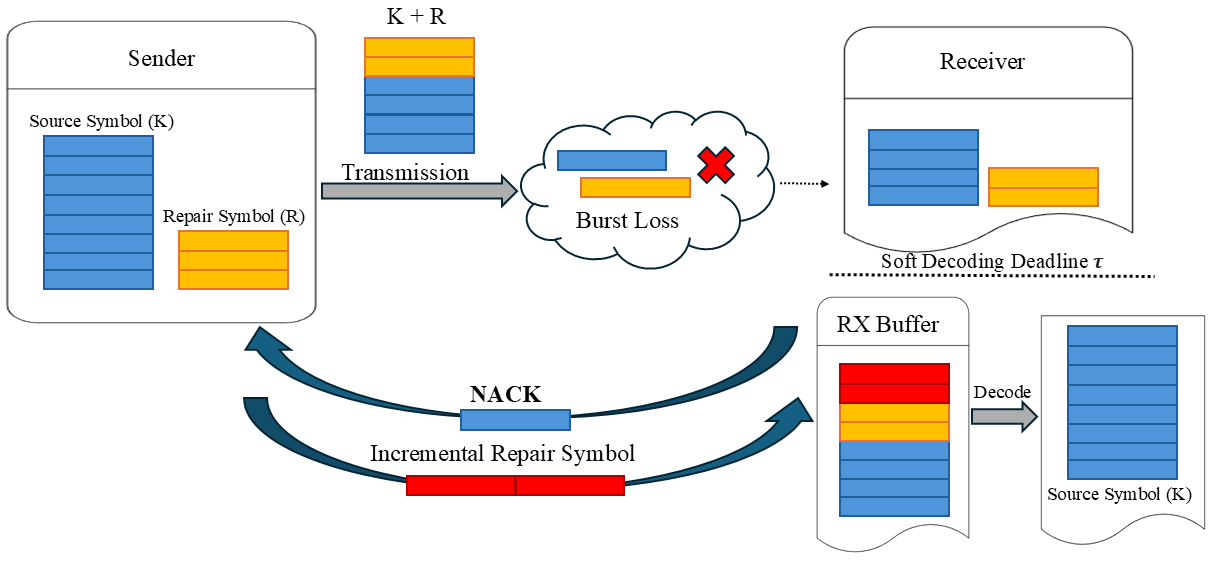}
    \caption{Block-level incremental-redundancy FEC transmission and decoding process with a soft decoding deadline.}
    \label{fig:incremental_ir}
\end{figure}

Fig.~\ref{fig:incremental_ir} illustrates the proposed DIR FEC transmission mechanism. At the sender, application data is first segmented into consecutive blocks, each containing $K$ source symbols. For every block, the encoder generates $R$ repair symbols and transmits the resulting $K + R$ encoded symbols over the network. Due to wireless losses, the receiver may obtain only a subset of these symbols.

The receiver buffers all successfully received symbols on a per-block basis. As symbols arrive, the receiver continuously checks whether the decoding condition is satisfied. If enough innovative symbols are available, the receiver immediately performs FEC decoding and reconstructs the original data block.

Each block is associated with a soft decoding deadline $\tau$. If decoding has not succeeded before $\tau$, the receiver assumes that additional redundancy is required and sends a NACK to the sender requesting repair symbols.

Upon receiving a NACK, the sender gets the number of missing symbols based on receiver feedback and generates additional repair symbols. Specifically, the sender transmits $d$ repair symbols to compensate for the observed deficit, together with an additional redundancy margin $\Delta R$ to protect against further losses in the subsequent transmission round.

To prevent overly frequent retransmission triggering, a cooldown interval is introduced after each repair round. Once a NACK is sent, the receiver waits for approximately $\tau + RTT$ before issuing another repair request. This waiting interval allows the newly transmitted repair symbols to arrive and avoids excessive NACK signaling. The process repeats until decoding succeeds or the block deadline expires.

\subsection{End-to-End Workflow of BC-DIR}
\label{subsec:payload_structure}
\begin{figure*}[!t]
    \centering
    \includegraphics[width=2\columnwidth]{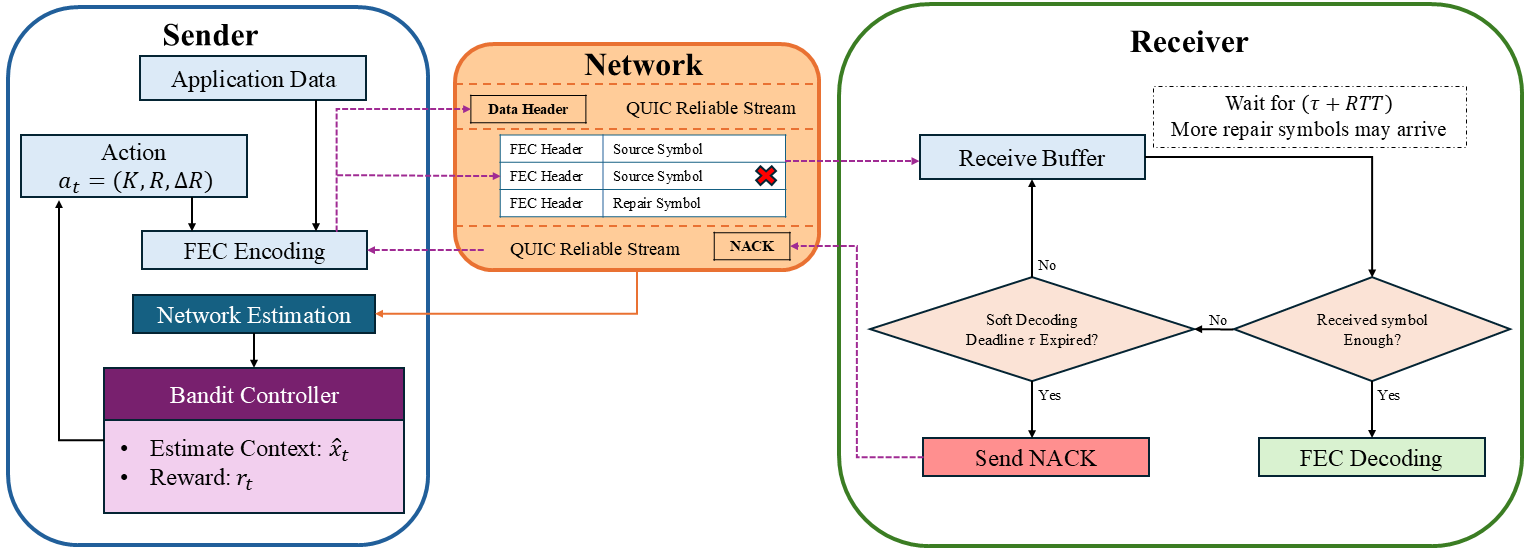}
    \caption{End-to-end workflow of BC-DIR. Control messages are delivered through QUIC reliable streams, while source and repair symbols are transmitted as QUIC datagrams and may experience packet loss.}
    \label{fig:workflow}
\end{figure*}

\begin{table}[t]
\centering
\caption{Data Header fields for a transmission round.}
\label{tab:dataheader}
\begin{tabular}{l c p{5cm}}
\hline
\textbf{Name} & \textbf{Size (B)} & \textbf{Description} \\
\hline
V     & 1 & Protocol version identifier. \\
$D$    & 8 & Total application data size in the current transmission round. \\
$L$   & 4 & Payload size of each source symbol. \\
$\tau$ & 4 & Soft decoding deadline per block. \\
\hline
\end{tabular}
\end{table}

Fig.~\ref{fig:workflow} illustrates the end-to-end workflow of BC-DIR. The protocol separates control signaling from data transmission. Control messages are delivered through QUIC’s reliable streams, while source and repair symbols are transmitted using QUIC datagrams and may experience packet loss.

Each transmission round begins with a \emph{Data Header} sent through QUIC’s reliable transport. This message establishes the configuration for the current round, including the total application data size $D$, the symbol payload size $L$, and the soft decoding deadline $\tau$. Because the Data Header is carried on the reliable streams, both endpoints maintain a consistent configuration state.

After the Data Header exchange, application data is transmitted as FEC-protected symbols using QUIC datagrams. Each datagram carries an FEC header followed by a payload of size $L$, and the corresponding header fields are summarized in Table~\ref{tab:fecheader}. Symbols with the same \texttt{BlockID} belong to the same FEC block and are buffered together at the receiver, while \texttt{SymID} identifies the encoding symbol within the block. Unlike control messages, these datagrams may be lost in the wireless channel.

As symbols arrive, the receiver continuously checks whether the current block can be decoded. If sufficient symbols have been collected, decoding is immediately performed. Otherwise, the receiver waits until the soft decoding deadline $\tau$ expires. If decoding is still infeasible at that time, the receiver sends a NACK carrying the corresponding \texttt{BlockID}. NACK messages are also transmitted through QUIC’s reliable streams so that control feedback is never lost.

Upon receiving a NACK, the sender generates additional repair symbols for the requested block and transmits them as new datagrams. To avoid overly frequent repair triggering, the receiver waits for approximately $\tau+RTT$ after issuing a NACK before sending another request, allowing the newly transmitted symbols to arrive and be incorporated into the decoding buffer.

This receiver-driven incremental redundancy process repeats until the block is successfully decoded or the block deadline expires.
Once all blocks in the current transmission round have been decoded and the total data size $D$ has been delivered,
the receiver sends a completion notification to the sender.
A new round begins only after this completion signal is received, at which point a new Data Header is issued.

\begin{table}[t]
\centering
\caption{FEC header fields carried by each transmitted symbol.}
\label{tab:fecheader}
\begin{tabular}{l c p{5cm}}
\hline
\textbf{Name} & \textbf{Size (B)} & \textbf{Description} \\
\hline
V & 1 & Protocol version identifier. \\
BlockID & 2 & Identifier of the FEC block. \\
$N$       & 1 & Total number of symbols transmitted so far for the block. \\
$K$       & 1 & Number of source symbols in the block. \\
SymID   & 1 & Encoding symbol identifier within the block. \\
\hline
\end{tabular}
\end{table}

\subsection{Soft Decoding Deadline Design}
\label{subsec:tau_design}

The parameter $\tau$ is used as a soft decoding deadline for receiver-side repair triggering. It specifies how long the receiver should continue waiting for symbols from the current transmission round before requesting additional repair symbols. Its design is particularly important in V2X environments, where burst losses are often accompanied by packet reordering. In such cases, relying solely on ACK-driven loss detection may trigger repair too aggressively, because out-of-order delivery can be misinterpreted as actual loss.

Therefore, $\tau$ serves as a short reassembly window after the nominal transmission span of the current round.  A small $\tau$ tends to trigger repair prematurely, while an excessively large $\tau$ delays repair activation and reduces the remaining time budget available for subsequent recovery rounds. To balance these effects, $\tau$ is defined from the FEC block length of the current transmission round.

Let $\Delta$ denote the nominal symbol transmission interval. Under QUIC, packet transmission is paced by the congestion controller, so $\Delta$ is determined by the current sending rate and reflects the nominal time consumed by one transmitted symbol~\cite{rfc9002}. For a round consisting of $N$ transmitted symbols, we define
\begin{equation}
\tau = \rho N \Delta,
\label{eq:tau_design}
\end{equation}
where $\rho\in(0,1)$ is a proportional waiting factor.

This definition makes the soft decoding deadline proportional to the scale of the current transmission round. 
It allows delayed symbols to arrive before repair is triggered, thereby reducing unnecessary feedback under burst loss and packet reordering, while still preserving sufficient time for additional recovery attempts before the overall deadline expires.

\subsection{Deadline-Constrained Reliability of ARQ and DIR}
\label{subsubsec:deadline_reliability}

We analyze deadline-constrained reliability of ARQ and DIR under a Gilbert-Elliott (GE) burst-loss channel.

At symbol time $t$, the channel state $s_t \in \{\mathrm G,\mathrm B\}$ follows a two-state Markov chain with transition probabilities $\alpha$ and $\beta$. Symbols transmitted in the Good state are successfully received, whereas symbols transmitted in the Bad state are lost. Let $S_N$ denote the number of lost symbols among $N$ transmissions. Under the GE model,
\[
S_N = \sum_{t=1}^{N} \mathbf{1}\{s_t=\mathrm B\}.
\]

Decoding succeeds once at least $K+1$ innovative symbols are collected. Let each triggered repair round incur a cooldown $c_0 = RTT+\tau$.

\subsubsection{Initial Phase and Conditional Deficit}

After transmitting $N_0=K+R$ symbols, initial decoding succeeds if
\begin{equation}
S_{N_0}\le R-1.
\end{equation}
The corresponding success probability under GE loss is
\begin{equation}
p_0 = \Pr(S_{N_0}\le R-1),
\end{equation}
where $S_{N_0}$ follows a Markov-binomial distribution.

Conditioned on initial failure, let $d$ denote the innovative deficit. In the following analysis, ARQ and DIR are compared under the same realized deficit $d$, so that both schemes operate under identical loss realizations.

\subsubsection{GE-Based Symbol Repair Model}

Upon a repair trigger with deficit $d$, the sender transmits $d+\Delta R$ symbols, where $\Delta R\ge 0$ is the redundancy margin. Under the GE channel, the repair round succeeds if and only if
\begin{equation}
S_{d+\Delta R}\le \Delta R.
\label{eq:round_success_strict}
\end{equation}
The one-round success probability is therefore
\begin{equation}
p(\Delta R)
=
\Pr(S_{d+\Delta R}\le \Delta R),
\label{eq:pDeltaR_GE}
\end{equation}
where $S_{d+\Delta R}$ is the GE loss count over the repair window. ARQ corresponds to $\Delta R=0$, and DIR corresponds to $\Delta R>0$. The following lemma formalizes a basic structural property of this mechanism: increasing the redundancy margin $\Delta R$ cannot decrease the probability of one-round successful repair.

\begin{lemma}\label{lem:monotone_p}

Under the GE loss model, the one-round success probability
$p(\Delta R)$ defined in~\eqref{eq:pDeltaR_GE}
is non-decreasing in $\Delta R$, i.e.,
\[
p(\Delta R+1)\ \ge\ p(\Delta R),\qquad \forall \Delta R\ge 0.
\]

\end{lemma}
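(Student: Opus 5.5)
Let $m=\Delta R$ and write $S_n$ for the GE loss count over the first $n$ symbols of the repair window. The plan is a sample-path (coupling) argument. Both $p(m)$ and $p(m+1)$ are evaluated on one realization of the channel-state sequence $s_1,s_2,\dots$ over the repair window. By~\eqref{eq:pDeltaR_GE}, $p(m)=\Pr(S_{d+m}\le m)$ and $p(m+1)=\Pr(S_{d+m+1}\le m+1)$, where both counts are prefix sums of the same sequence started from the same initial state distribution. Note that the window length $d+m$ depends on $m$ only through its length, not through the channel dynamics. So it suffices to prove the event inclusion
\[
\{S_{d+m}\le m\}\subseteq\{S_{d+m+1}\le m+1\}.
\]

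The key step is the one-step recursion
\[
S_{d+m+1}=S_{d+m}+\mathbf{1}\{s_{d+m+1}=\mathrm B\},
\]
with the indicator in $\{0,1\}$. On the event $S_{d+m}\le m$ this gives $S_{d+m+1}\le m+1$, which is the inclusion. Monotonicity of probability then yields $p(m+1)\ge p(m)$ for every $m\ge 0$.

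I will add a remark that the inclusion is pathwise. It therefore holds for any initial distribution of the Markov chain, for example stationary or conditioned on the state at the NACK time, and for any $\alpha,\beta$. This guarantees the monotonicity is not an artifact of the Markov-binomial parameters.

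The main obstacle is justifying the coupling, that is, that the two margins can be compared on a common state sequence. If the repair round for margin $m+1$ were modeled as starting from a different channel state than the round for margin $m$, the coupling would fail. I will address this by fixing the channel state distribution at the start of the repair window independently of $\Delta R$. This is natural because the deficit $d$ and the trigger time are determined before $\Delta R$ is applied, as assumed in the comparison under a common realized deficit. The extra symbol is then appended at the end of the window, so the first $d+m$ states have the same law in both cases. With this convention everything else is immediate. An analytic alternative would be to compare the two distribution functions directly, but it is unnecessary.
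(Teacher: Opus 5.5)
Your proposal is correct and follows the paper's proof: both establish the pathwise inclusion $\{S_{d+\Delta R}\le \Delta R\}\subseteq\{S_{d+\Delta R+1}\le \Delta R+1\}$ on a common channel-state sequence and conclude by monotonicity of probability. Your one-step recursion $S_{d+\Delta R+1}=S_{d+\Delta R}+\mathbf{1}\{s_{d+\Delta R+1}=\mathrm B\}$ states the fact the inclusion actually needs (one extra symbol adds at most one loss) more precisely than the paper's appeal to $S_n$ being non-decreasing, and your coupling convention (initial-state law fixed independently of $\Delta R$) makes explicit an assumption the paper leaves implicit.
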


\begin{proof}

Let $n=d+\Delta R$.
For any realization of the channel state sequence, $S_n$ is non-decreasing in $n$. Consider the margin $\Delta R$, the event
\[
\{S_{d+\Delta R}\le \Delta R\},
\]
implies that at most $\Delta R$ losses occur in the first $d+\Delta R$ transmissions. When extending to $d+\Delta R + 1$ transmissions, the allowable loss threshold increases to $\Delta R + 1$, which enlarges the admissible loss region. Hence
\[
\{S_{d+\Delta R}\le \Delta R\}
\subseteq
\{S_{d+\Delta R+1}\le \Delta R + 1\},
\]
which implies
\[
p(\Delta R)\le p(\Delta R+1).
\]

\end{proof}

\subsubsection{Deadline-Constrained Comparison Between DIR and ARQ}

After the initial transmission phase, the remaining deadline budget limits how many repair rounds can still be executed. Under a given realized deficit $d$, let $V(\Delta R)$ denote the maximum number of feasible repair rounds under redundancy margin $\Delta R$, ARQ and DIR therefore differ in two coupled aspects: the success probability of each repair round and the number of repair opportunities available before the deadline. This leads to the following sufficient condition under which DIR achieves strictly higher deadline-constrained success probability than ARQ.

\begin{theorem}
\label{thm:DIR_vs_ARQ_strict}
For any $\Delta R>0$, if the repair process is feasible under the deadline constraint and
\begin{equation}
p(\Delta R)
>
1-(1-p(0))^{\frac{V(0)}{V(\Delta R)}},
\tag{C1}
\label{eq:C1_strict}
\end{equation}
then
\[
P_{\mathrm{DIR}}(T;\Delta R\mid d)
>
P_{\mathrm{ARQ}}(T\mid d).
\]
\end{theorem}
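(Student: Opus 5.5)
We model the deadline-constrained success probability after the initial phase as the probability that at least one of the feasible repair rounds succeeds. To make the two schemes comparable, we treat the repair rounds of each scheme as independent trials with the one-round success probability $p(\cdot)$ from~\eqref{eq:pDeltaR_GE}. This amounts to assuming the GE channel approximately regenerates across the cooldown $c_0=RTT+\tau$ separating rounds; in the write-up we state this explicitly as the modeling assumption. Conditioned on the realized deficit $d$, this gives
\begin{align*}
P_{\mathrm{ARQ}}(T\mid d) &= 1-(1-p(0))^{V(0)},\\
P_{\mathrm{DIR}}(T;\Delta R\mid d) &= 1-(1-p(\Delta R))^{V(\Delta R)}.
\end{align*}
The feasibility hypothesis is used to guarantee $V(\Delta R)\ge 1$, so the exponent $V(0)/V(\Delta R)$ in~\eqref{eq:C1_strict} is well defined and DIR has at least one repair opportunity. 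We would also note that $V(\Delta R)\le V(0)$, since each DIR round sends $d+\Delta R\ge d$ symbols and therefore consumes at least as much of the remaining budget; this is not needed for the proof, but it explains why the exponent is at least $1$ and why the condition is nontrivial.

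The core step is an algebraic rearrangement of~\eqref{eq:C1_strict}. From (C1) we get $1-p(\Delta R) < (1-p(0))^{V(0)/V(\Delta R)}$. Both sides are nonnegative, so raising them to the positive power $V(\Delta R)$ preserves the strict inequality, because $x\mapsto x^{V(\Delta R)}$ is strictly increasing on $[0,\infty)$. This yields
\[
(1-p(\Delta R))^{V(\Delta R)} < (1-p(0))^{V(0)}.
\]
Subtracting both sides from $1$ gives $P_{\mathrm{DIR}}(T;\Delta R\mid d)>P_{\mathrm{ARQ}}(T\mid d)$, which is the claim.

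The main obstacle is justifying the product form of the failure probability, since GE losses are correlated across rounds. One option is to take the independence across rounds as the stated modeling assumption. The more careful alternative I would try is to define $p(\cdot)$ conditionally on the channel state at the start of a round, and argue via the Markov property that the probability of $V$ consecutive failures is a product of per-round conditional failure probabilities. If we interpret $p$ as the stationary or worst-case such quantity, the same monotonicity argument goes through.

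I would add a remark that Lemma~\ref{lem:monotone_p} gives $p(\Delta R)\ge p(0)$. Hence (C1) holds automatically whenever $V(\Delta R)=V(0)$, so the condition only binds when the margin reduces the number of feasible repair rounds. The edge cases $p(0)=1$ and $p(\Delta R)=1$ need separate handling: if $p(0)=1$, the right-hand side of (C1) equals $1$, so (C1) fails and the theorem is vacuous; if $p(\Delta R)=1$ while $p(0)<1$, the strict inequality holds directly.
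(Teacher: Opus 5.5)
Your proof is correct and is essentially the paper's. The paper also writes the repair-phase success as $1-(1-p(\Delta R))^{V(\Delta R)}$ and leaves the inter-round independence implicit. It wraps this as $p_0+(1-p_0)(\cdot)$ and cancels the common $p_0$, which tacitly needs $p_0<1$ for strictness; it then passes between the conclusion and (C1) by taking $V(\Delta R)$-th roots, which is your power-raising step read in reverse.

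Two of your side remarks need tightening. First, when $V(\Delta R)=V(0)$, (C1) becomes the strict inequality $p(\Delta R)>p(0)$, which Lemma~\ref{lem:monotone_p} does not guarantee because it only gives non-decreasing monotonicity. Second, your fallback of reading $p$ as a worst-case per-round probability would only upper-bound DIR's failure probability, while the comparison also needs a lower bound on ARQ's, so keep the explicit independence assumption.
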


\begin{proof}
After the initial phase, the sender has already transmitted $K+R$ symbols. Since each symbol consumes transmission time $\Delta$, the remaining transmission time is
\[
T_{\mathrm{eff}} = T-(K+R)\Delta.
\]
For a repair round with redundancy margin $\Delta R$, the sender transmits $d+\Delta R$ symbols, and each triggered round also incurs a cooldown $c_0$. Hence the total time consumed by one repair round is
\[
c_0 + (d+\Delta R)\Delta.
\]
Therefore, the maximum number of feasible repair rounds under the remaining deadline constraint is:
\[
V(\Delta R)
=
\left\lfloor
\frac{T_{\mathrm{eff}}}
{c_0 + (d+\Delta R)\Delta}
\right\rfloor.
\]
To enable at least one repair opportunity, $\Delta R$ must satisfy
\[
c_0 + (d+\Delta R)\Delta \le T_{\mathrm{eff}}.
\]
Hence the feasible redundancy set is finite:
\begin{equation}
    \Delta R_{\max} = \left\lfloor\frac{T_{\mathrm{eff}} - c_0 - d\Delta}{\Delta}\right\rfloor.\label{eq:r_max}
\end{equation}

The deadline-constrained block success probability under redundancy margin $\Delta R$ is
\[
P(T;\Delta R\mid d)
=
p_0+(1-p_0)\Big(1-(1-p(\Delta R))^{V(\Delta R)}\Big).
\]
Accordingly,
\begin{align*}
&P_{\mathrm{ARQ}}(T\mid d)=P(T;0\mid d),\\
&P_{\mathrm{DIR}}(T;\Delta R\mid d)=P(T;\Delta R\mid d).
\end{align*}
Since the term $p_0$ is common to both schemes, comparing DIR and ARQ reduces to comparing the probability that at least one repair round succeeds. Thus,
\[
P_{\mathrm{DIR}}(T;\Delta R\mid d)
>
P_{\mathrm{ARQ}}(T\mid d)
\]
holds if and only if
\[
1-(1-p(\Delta R))^{V(\Delta R)}
>
1-(1-p(0))^{V(0)}.
\]
This is equivalent to
\[
(1-p(\Delta R))^{V(\Delta R)}
<
(1-p(0))^{V(0)}.
\]
Since $V(\Delta R)\ge 1$, taking the $V(\Delta R)$-th root on both sides yields
\[
1-p(\Delta R)
<
(1-p(0))^{\frac{V(0)}{V(\Delta R)}},
\]
or equivalently,
\[
p(\Delta R)
>
1-(1-p(0))^{\frac{V(0)}{V(\Delta R)}}.
\]
\end{proof}

\begin{remark}[Practical condition for DIR advantage]\label{remark:engineering_regime_final}

In practical wireless systems, the cooldown time $c_0$ is typically on the order of milliseconds, whereas symbol transmission time $\Delta$ is on the order of microseconds. Hence
\[
(d+\Delta R)\Delta \ll c_0,
\]
so $V(\Delta R)$ is dominated by $c_0$
and usually satisfies
\[
V(\Delta R)\in\{V(0),\,V(0)-1\}.
\]
In the dominant case $V(\Delta R)=V(0)-1$, condition~\eqref{eq:C1_strict} reduces to
\[
p(\Delta R)
>
1-(1-p(0))^{\frac{V(0)}{V(0)-1}}.
\]

In practice, even moderate redundancy margins substantially increase $p(\Delta R)$, while reducing $V(\Delta R)$ by at most one round. Therefore, for typical timing and burst-loss parameters, DIR achieves higher deadline-constrained reliability than ARQ in most operating regimes, although~\eqref{eq:C1_strict} is not universally tight.
\end{remark}

\subsubsection{Existence of an Optimal Redundancy Margin}
\begin{theorem}\label{thm:existence_optimal_margin_clean}

Under the deadline constraint, there exists at least one
\[
\Delta R^\star\in\{0,1,\dots,\Delta R_{\max}\}
\]
such that
\[
P(\Delta R^\star)
=
\max_{\Delta R\in\{0,\dots,\Delta R_{\max}\}}
P(\Delta R).
\]

\end{theorem}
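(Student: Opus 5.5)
The statement is essentially an existence-of-maximizer claim over a finite set, so the plan is to establish that the feasible set of margins is finite and nonempty and that $P(\cdot)$ is a well-defined real-valued function on it. The maximum then exists trivially. Here $P(\Delta R)$ is shorthand for $P(T;\Delta R\mid d)$ from the proof of Theorem~\ref{thm:DIR_vs_ARQ_strict}.

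\textbf{Step 1: the feasible set.} From the proof of Theorem~\ref{thm:DIR_vs_ARQ_strict}, at least one repair round is possible only if $c_0+(d+\Delta R)\Delta\le T_{\mathrm{eff}}$. This yields the bound $\Delta R_{\max}$ in~\eqref{eq:r_max}. Since $\Delta>0$ and $T_{\mathrm{eff}}$, $c_0$, $d$ are finite, $\Delta R_{\max}$ is a finite integer. Under the standing assumption that the repair process is feasible (i.e.\ $c_0+d\Delta\le T_{\mathrm{eff}}$), we also have $\Delta R_{\max}\ge 0$. The index set $\{0,1,\dots,\Delta R_{\max}\}$ is therefore finite and nonempty, with cardinality $\Delta R_{\max}+1$.

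\textbf{Step 2: $P$ is well defined on this set.} For each integer $\Delta R$ in the set, $p(\Delta R)=\Pr(S_{d+\Delta R}\le\Delta R)$ is a probability in $[0,1]$, as defined in~\eqref{eq:pDeltaR_GE}. The round count $V(\Delta R)=\lfloor T_{\mathrm{eff}}/(c_0+(d+\Delta R)\Delta)\rfloor$ is a finite integer with $V(\Delta R)\ge 1$. Consequently
\[
P(\Delta R)=p_0+(1-p_0)\bigl(1-(1-p(\Delta R))^{V(\Delta R)}\bigr)
\]
is a real number in $[0,1]$.

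\textbf{Step 3: conclusion.} A real-valued function on a finite nonempty set attains its maximum. One can argue this directly by induction on the set size, or by noting that the set of values $\{P(0),\dots,P(\Delta R_{\max})\}$ is finite, so it has a largest element. Any preimage of that element is a valid $\Delta R^\star$. Uniqueness is not claimed, so ties are harmless.

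The only point needing care is nonemptiness, namely $\Delta R_{\max}\ge 0$, which rests on the feasibility hypothesis carried over from Theorem~\ref{thm:DIR_vs_ARQ_strict}. Beyond that, there is no real obstacle.

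It may also be worth remarking why the maximizer is typically interior, although this is not needed for existence. By Lemma~\ref{lem:monotone_p}, $p(\Delta R)$ is non-decreasing in $\Delta R$, while $V(\Delta R)$ is non-increasing in $\Delta R$. These two opposing monotone effects are what make an interior $\Delta R^\star$ plausible.
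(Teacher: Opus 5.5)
Your proposal is correct and follows the paper's own argument: the feasible margin set $\{0,\dots,\Delta R_{\max}\}$ is finite by~\eqref{eq:r_max}, and $P(\Delta R)$ is well defined on it, so a maximizer exists. You also check explicitly that this set is nonempty ($\Delta R_{\max}\ge 0$ under the feasibility hypothesis $c_0+d\Delta\le T_{\mathrm{eff}}$), which the paper leaves implicit and which is a worthwhile addition.
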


\begin{proof}

The feasible redundancy set is finite by~\eqref{eq:r_max}. Since $P(\Delta R)$ is well-defined for each admissible $\Delta R$, a maximizer exists.

\end{proof}

The existence of an optimal redundancy margin $\Delta R^\star$ follows from the deadline-imposed upper bound on admissible redundancy. However, $\Delta R^\star$ depends on channel statistics through $p_0$ and $p(\Delta R)$, which in turn depend on the underlying loss dynamics and timing parameters.

In dynamic V2X environments, these statistics are generally unknown and time-varying, and cannot be reliability estimated in closed form. Consequently, the optimal repair margin cannot be computed a priori and must be learned online. This motivates the introduction of a bandit-based adaptive controller to track the optimal $\Delta R$ under evolving network conditions.

\subsection{Contextual Bandit Control}
\label{subsec:bandit_formulation}

BC-DIR adopts a contextual bandit to adapt the DIR configuration on a per-round basis.
Each transmission round $t$ corresponds to the delivery and recovery of a set of
FEC-protected blocks. At the beginning of round $t$, the sender selects a DIR
configuration according to recent end-to-end feedback, without explicitly modeling
channel state transitions.

\subsubsection{Context}
Let $\mathbf{x}_t \in \mathbb{R}^6$ denote the observable context at round $t$:
\begin{equation}
\mathbf{x}_t = (\phi_t,\; o_t,\; \gamma_t,\; \eta_t,\; \Lambda_t,\; \Theta_t),
\label{eq:context_def_compact}
\end{equation}
where $\phi_t$ and $o_t$ are the achieved goodput and transmission overhead,
$\gamma_t$ is the number of NACK-triggered repair rounds,
$\eta_t\in\{0,1\}$ indicates whether the transmission round succeeds,
\begin{equation}
\Lambda_t=\frac{R_t}{K_t},
\label{eq:fec_rate_compact}
\end{equation}
is the effective FEC rate, and
\begin{equation}
\Theta_t=\frac{1}{W}\sum_{i=t-W}^{t}\mathbf{1}\{\eta_i=0\},
\label{eq:theta_def_compact}
\end{equation}
is the recent failure ratio over a window of size $W$.

To suppress short-term fluctuations caused by burst losses and MAC contention,
EWMA smoothing is applied to selected metrics. For a generic metric $z_t\in\{\phi_t,o_t,\gamma_t,\eta_t\}$,  the refined estimate $\hat{z}_t$ is updated recursively as
\begin{equation}
\hat{z}_t=\lambda \hat{z}_{t-1}+(1-\lambda)z_t,
\qquad
\lambda\in(0,1),
\label{eq:ewma_compact}
\end{equation}
where $\lambda$ controls the smoothing factor. The final refined context $\hat{\mathbf{x}}_t=(\hat{\phi}_t,\hat{o}_t,\hat{\gamma}_t,\hat{\eta}_t,\Lambda_t,\Theta_t)$.

\subsubsection{Action space}
Each action corresponds to one admissible DIR configuration,
\begin{equation}
a=(K,\;R,\;\Delta R),
\label{eq:action_def_compact}
\end{equation}
where the tuple determines the block size, initial redundancy,
incremental repair margin, and soft decoding deadline.
Let $\mathcal{A}$ denote the finite set of admissible actions.
At round $t$, the sender observes $\hat{\mathbf{x}}_t$ and selects
$a_t\in\mathcal{A}$, which remains fixed during that round.

\subsubsection{Reward}
The reward is designed to balance timely delivery, reliability, feedback cost, and redundancy overhead. It consists of four components:
\begin{align*}
r_t^\phi &= w_{\phi}\,\frac{\phi_t}{c},\\
r_t^\eta &= -\,w_{\eta}(1-\eta_t),\\
r_t^\gamma &= -\,w_{\gamma}\gamma_t,\\
r_t^o &= w_o\frac{1}{1+\left(\frac{o_t}{\delta}\right)^2},
\end{align*}
where $c$ is the nominal link capacity and $\delta$ controls
the sensitivity of the overhead penalty.
The total reward is
\begin{equation}
r_t = r_t^\phi + r_t^\eta + r_t^\gamma + r_t^o.
\label{eq:reward_total_compact}
\end{equation}

This reward favors configurations that deliver useful data before the deadline with fewer repair rounds and moderate redundancy, enabling online adaptation to time-varying V2X channel conditions.

\subsection{Linear Context-Action Modeling and Thompson Sampling}
\label{subsec:lints_model}

For each transmission round $t$, the sender observes the refined context
$\hat{\mathbf{x}}_t$ and selects a DIR configuration $a_t$.
The expected reward is modeled as a linear function of a joint
context-action feature vector:
\begin{equation}
\mathbb{E}[r_t \mid \hat{\mathbf{x}}_t,a_t]
=
\boldsymbol{f}(\hat{\mathbf{x}}_t,a_t)^\top \boldsymbol{\theta},
\label{eq:lints_linear_model_compact}
\end{equation}
where $\boldsymbol{\theta}$ is an unknown parameter vector.

The feature vector is constructed as
\begin{equation}
\boldsymbol{f}(\hat{\mathbf{x}}_t,a_t)
=
\begin{bmatrix}
1 \\
\hat{\mathbf{x}}_t \\
\boldsymbol{\psi}(a_t) \\
\hat{\mathbf{x}}_t \otimes \boldsymbol{\psi}(a_t)
\end{bmatrix},
\label{eq:lints_feature_compact}
\end{equation}
where $\boldsymbol{\psi}(a_t)$ is the encoding of the discrete action
and $\otimes$ denotes the Kronecker product.
This representation captures context effects, action effects,
and their interactions in a unified linear model.

BC-DIR uses Linear Thompson Sampling (LinTS) to maintain a Gaussian posterior
over $\boldsymbol{\theta}$:
\begin{equation}
\boldsymbol{\theta}
\sim
\mathcal{N}\!\left(
\hat{\boldsymbol{\theta}}_{t-1},
\mathbf{V}_{t-1}^{-1}
\right),
\end{equation}
where $\hat{\boldsymbol{\theta}}_{t-1}$ is the posterior mean and
$\mathbf{V}_{t-1}$ is the precision matrix.

At round $t$, a parameter sample $\tilde{\boldsymbol{\theta}}_t$
is drawn from the posterior, and the action is selected as
\begin{equation}
a_t
=
\arg\max_{a\in\mathcal{A}}
\boldsymbol{f}(\hat{\mathbf{x}}_t,a)^\top
\tilde{\boldsymbol{\theta}}_t.
\label{eq:lints_action_compact}
\end{equation}

After observing reward $r_t$, the posterior statistics are updated by Bayesian linear regression with exponential discounting:
\begin{align}
\mathbf{V}_t
&=
\lambda \mathbf{V}_{t-1}
+
\boldsymbol{f}(\hat{\mathbf{x}}_t,a_t)
\boldsymbol{f}(\hat{\mathbf{x}}_t,a_t)^\top,
\label{eq:lints_precision_update_compact}
\\
\mathbf{b}_t
&=
\lambda \mathbf{b}_{t-1}
+
r_t\,\boldsymbol{f}(\hat{\mathbf{x}}_t,a_t),
\label{eq:lints_b_update_compact}
\end{align}
where $\lambda\in(0,1)$ discounts older observations.
The posterior mean is then updated as
\begin{equation}
\hat{\boldsymbol{\theta}}_t
=
\mathbf{V}_t^{-1}\mathbf{b}_t.
\label{eq:lints_mean_update_compact}
\end{equation}

This online update enables BC-DIR to adapt its redundancy configuration to non-stationary V2X channel conditions while keeping the control logic lightweight.

Overall, BC-DIR improves deadline-constrained completion probability by trading moderate redundancy overhead for higher per-round repair success. Compared with pure retransmission, the proposed incremental-redundancy mechanism reduces repeated feedback cycles and increases the likelihood of successful decoding within a limited number of repair opportunities. The analytical results further show that an appropriate redundancy margin exists under deadline constraints, but its optimal value depends on time-varying channel and timing conditions. This motivates the bandit controller, which adaptively adjusts the redundancy configuration online according to end-to-end observations, enabling BC-DIR to balance reliability, latency, and bandwidth efficiency in dynamic V2X environments.

\section{Evaluation}
\label{sec:evaluation}

In this section, we evaluate BC-DIR from three complementary perspectives. We first validate the analytical insights through Monte Carlo simulations under burst-loss channels, focusing on the reliability advantage of DIR over ARQ and the role of the redundancy margin under deadline constraints. We then conduct an ablation study to isolate the contributions of initial FEC protection and incremental repair in the complete BC-DIR design. Finally, we evaluate the full system in realistic vehicular simulations under both congested urban V2X scenarios and favorable network conditions, examining completion ratio, overhead, worst-case delay, and recovery rounds across different traffic intensities.
\subsection{Experiment Setup}

\subsubsection{Implementation and Network Configuration}

The proposed BC-DIR mechanism is implemented on top of the official \texttt{quic-go} stack, which is one of the most widely deployed open-source implementations of QUIC in practical Internet services. Since BBRv2 estimates bottleneck bandwidth and propagation delay rather than reacting directly to packet loss, it is particularly suitable for V2X wireless environments where losses are frequently caused by channel impairments instead of persistent congestion.

The primary evaluation scenario is implemented using the Veins framework integrated with OMNeT++, with SUMO providing microscopic vehicular mobility~\cite{sommer2011bidirectionally}. The wireless link follows the IEEE 802.11p protocol~\cite{ieee80211p}, which is widely adopted in vehicular communications. The physical and MAC parameters used in the mobility-driven simulations are summarized in Table~\ref{tab:sim_params}.

\subsubsection{Vehicular Mobility and Evaluation Scenarios}
\begin{table}[t]
\centering
\caption{Vehicular Network Simulation Parameters}
\label{tab:sim_params}
\begin{tabular}{l l}
\hline
Parameter & Value \\
\hline
Maximum interference distance & 2600 m \\
Transmission power & 20 mW \\
PHY bitrate & 10 Mbps \\
Minimum reception power & $-110$ dBm \\
Noise floor & $-98$ dBm \\
Number of RSUs & 1 \\
Maximum vehicle speed & 60 km/h \\
Traffic intensity & 300, 600, 900, 1200 vehicles/hour \\
\hline
\end{tabular}
\end{table}

The mobility-driven evaluation scenario represents a dense urban intersection with two lanes per direction and surrounding buildings that introduce frequent non-line-of-sight conditions and signal obstruction. Vehicles periodically communicate with a single roadside unit (RSU) and exchange application data under shared wireless access. The traffic intensity is varied from 300 to 1200 vehicles per hour to evaluate performance under different congestion levels~\cite{petrov2021performance}. The resulting channel conditions exhibit significant burst losses and jitter, which are characteristic of real-world V2X environments.

A single RSU is used to isolate the impact of deadline-constrained loss recovery. With multiple RSUs, vehicles may switch their access point during mobility, introducing additional performance variation from association changes, MAC-layer contention redistribution, and transient link fluctuations. Such effects would couple transport-layer recovery performance with access switching behavior, making it difficult to attribute the observed gain specifically to the proposed redundancy mechanism. The single-RSU setting therefore provides a controlled scenario for evaluating how BC-DIR responds to burst loss and congestion under dynamic vehicular traffic.

In addition to the above mobility-driven scenario, we consider a second evaluation setting to isolate redundancy-control behavior under predominantly line-of-sight and favorable communication conditions. This setting serves as a reference case for high-quality V2X links, where packet loss is low but non-zero and typically remains below 1\%~\cite{eckermann2019performance}.

\subsubsection{Traffic Types and Data Objects}

The evaluation focuses on medium-scale data objects for which forward error correction can provide meaningful benefits. Very small safety beacons such as BSM or CPM packets are excluded~\cite{kenney2011dsrc}, since their payload sizes are insufficient to amortize FEC overhead.

Two representative data scales are considered: (i) approximately 100~KB objects corresponding to local dynamic map updates or perception data slices~\cite{etsi_tr_103_562}, and (ii) approximately 1~MB objects representing high-resolution perception segments, point cloud sharing, or OTA update fragments~\cite{3gpp_tr_22_886}.

\subsubsection{Benchmark Schemes}

BC-DIR is compared against the following baselines:

\begin{itemize}

\item \textbf{Native QUIC}: The baseline transport is the official \texttt{quic-go} implementation without any FEC mechanism. It represents the most widely used real-world QUIC deployment and relies purely on loss detection and retransmission for reliability.

\item \textbf{FLEC}~\cite{michel2023flec}: FLEC is implemented following its PQUIC-based design and employs a RLC FEC scheme. It dynamically adjusts the coding rate according to observed network conditions.

\item \textbf{Ablation Baselines}: To separately evaluate the contributions of initial FEC protection and deadline-aware incremental repair, we further include two fixed-parameter ablation baselines without bandit-based adaptation.

FEC-only: $(K=40, R=10, \Delta R=0)$. 
It provides only proactive redundancy in the initial transmission and disables additional repair-symbol injection after decoding failure.

DIR-only: $(K=40, R=0, \Delta R=10)$. 
It disables proactive redundancy in the initial transmission and relies entirely on incremental repair after decoding failure.

\end{itemize}

All schemes share identical congestion control, transport configuration, and bandwidth constraints, ensuring a fair comparison.

\subsubsection{Bandit Control Parameter Space}
The action space consists of $K\in[20,60]$, $R\in[0,20]$, $\Delta R\in[0,20]$.

The maximum repair margin is set to $\Delta R_{\max}=20$. As shown in Fig.~\ref{fig:vary_GE}, the reliability gain becomes marginal once $\Delta R$ exceeds 20 under typical burst-loss conditions. Increasing $\Delta R$ beyond this point yields limited improvement.

The bandit reward function balances completion probability, 
latency satisfaction, retransmission efficiency, and redundancy overhead.
The weight parameters used in training are:
$w_{\phi}=1$, 
$w_{\eta}=0.3$, 
$w_{\gamma}=0.3$, 
$w_{o}=0.3$, 
with exploration parameter $\delta=0.25$. These values are fixed across all experiments.



\subsection{Validation of Analytical Insights}\label{subsec:theory_validation}


We evaluate the deadline-constrained reliability via Monte Carlo simulation over a two-state GE loss model. The experiments only simulate the symbol transmission and loss process, and estimate the corresponding recovery success probability statistically. For each operating point, the experiment is independently repeated 5000 times. Unless otherwise specified, the parameter setting is aligned with the representative V2X configuration, with bandwidth $=10$~Mbps, $RTT=100$~ms, and block size $=100$~KB. To isolate the behavior of the repair mechanism, the initial redundancy is set to $R=0$. The default GE parameters are $(\alpha,\beta)=(0.03,0.3)$, corresponding to a typical burst-loss regime in vehicular wireless links.

\subsubsection{DIR vs. ARQ under Burst Loss}

\begin{figure}[t]
\centering
\includegraphics[width=0.9\columnwidth]{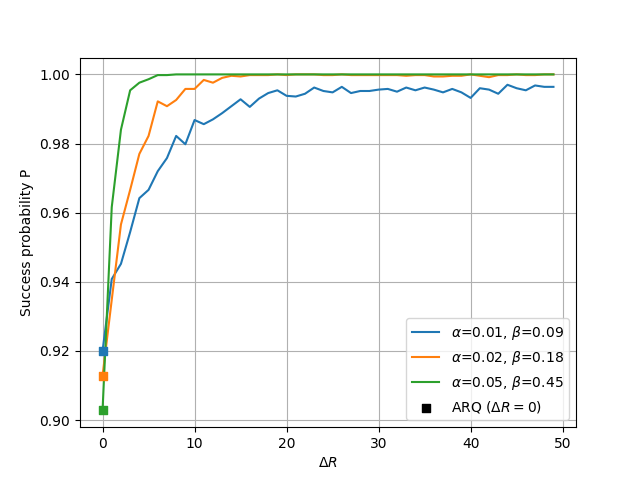}
\caption{Effect of channel burstiness on ARQ and DIR reliability under the GE model.}
\label{fig:vary_GE}
\end{figure}
Fig.~\ref{fig:vary_GE} shows the success probability as a function of the redundancy margin $\Delta R$ under different GE parameters. ARQ corresponds to $\Delta R=0$.

Across all tested burst configurations, any positive redundancy margin increases reliability compared with ARQ. The performance gap widens under longer burst conditions (smaller $\beta$), where retransmission-only recovery becomes increasingly inefficient due to repeated loss within burst periods.  Injecting additional innovative symbols per repair round increases the probability that at least one round successfully closes the deficit, leading to consistently higher deadline-constrained reliability.

These results validate the structural comparison established in Theorem~\ref{thm:DIR_vs_ARQ_strict}: under identical deficit conditions, DIR achieves no worse and typically strictly higher reliability than ARQ in burst-dominated regimes.

\subsubsection{Optimal Redundancy Margin under Deadline Constraints}
\begin{figure}[t]
\centering
\includegraphics[width=0.9\columnwidth]{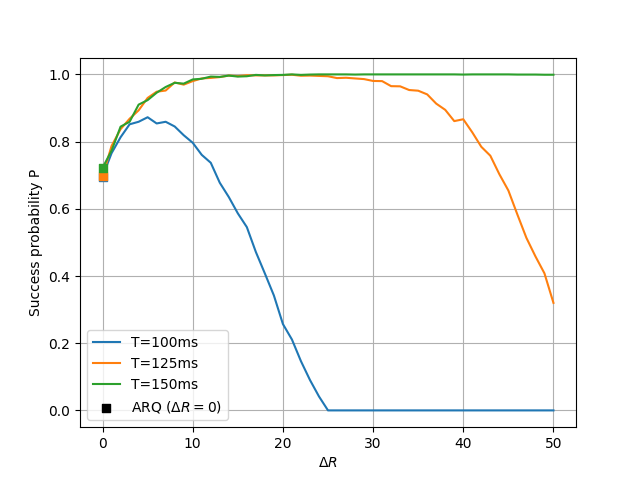}
\caption{Block success probability versus redundancy margin under different deadline constraints.}
\label{fig:vary_deadline}
\end{figure}

Fig.~\ref{fig:vary_deadline} presents the success probability versus $\Delta R$ under different deadline constraints ($T=100$~ms, $125$~ms, $150$~ms).

Under a tight deadline ($T=100$~ms), the curve exhibits a clear unimodal behavior. Increasing $\Delta R$ initially improves repair success probability, while excessive redundancy reduces the number of feasible repair rounds, causing reliability to decrease. This behavior directly reflects the deadline-induced repair budget tradeoff characterized in Theorem~\ref{thm:existence_optimal_margin_clean}.

As the deadline relaxes, the feasible repair budget expands and the decreasing region gradually diminishes. For sufficiently relaxed deadlines, reliability approaches monotonic saturation, indicating that time budget constraints become less dominant.

The results empirically confirm the existence of a finite optimal redundancy margin under tight deadline conditions.

\subsubsection{Impact of Concurrent Block Transmission}
\begin{figure}[t]
\centering
\includegraphics[width=0.9\columnwidth]{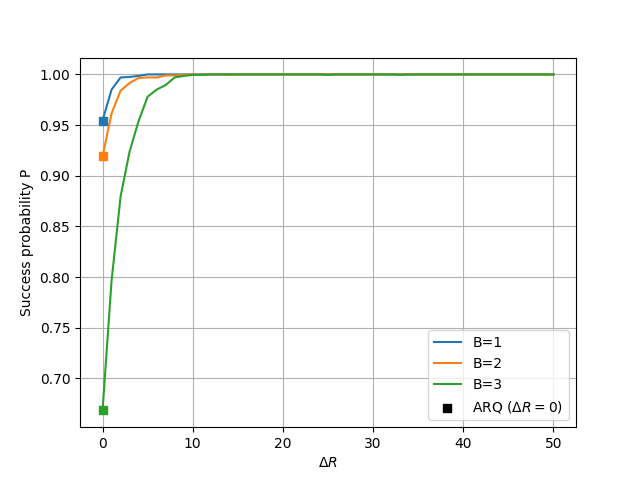}
\caption{Block success probability versus redundancy margin under different numbers of concurrent blocks.}
\label{fig:vary_B}
\end{figure}


Fig.~\ref{fig:vary_B} illustrates the impact of concurrently transmitted blocks. When multiple blocks share the same bandwidth, the effective symbol transmission time increases, thereby shrinking the repair budget available to each block.

Under this shared-bandwidth regime, ARQ becomes increasingly sensitive to budget reduction, and its success probability decreases noticeably as the number of blocks increases. DIR compensates for the reduced repair opportunities by increasing per-round innovation injection. A moderate redundancy margin restores near-unity reliability even when multiple blocks compete for bandwidth.

This behavior is consistent with the deadline-constrained repair model: reliability is governed jointly by per-round repair probability and the number of feasible rounds. Incremental redundancy provides robustness against budget shrinkage, which is common in dense V2X deployments.

\subsection{Ablation Study}

\begin{figure}[t]
    \centering
    \includegraphics[width=0.9\columnwidth]{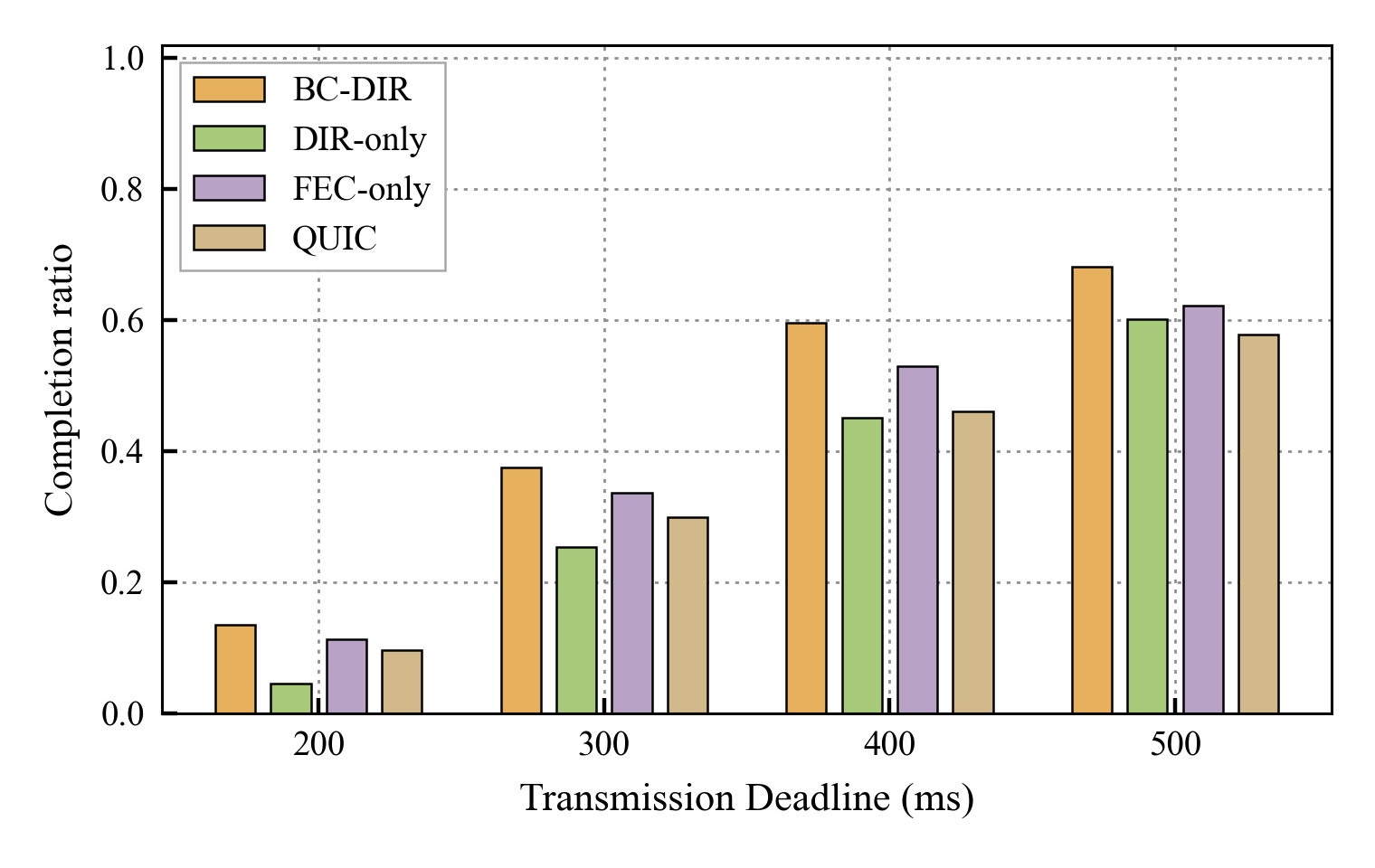}
    \caption{Completion ratio of QUIC, FEC-only, DIR-only, and BC-DIR under different transmission deadlines (traffic intensity = 600 vehicles/hour).}\label{fig:ablation_completion}
\end{figure}

\begin{figure}[t]
    \centering
    \includegraphics[width=0.9\columnwidth]{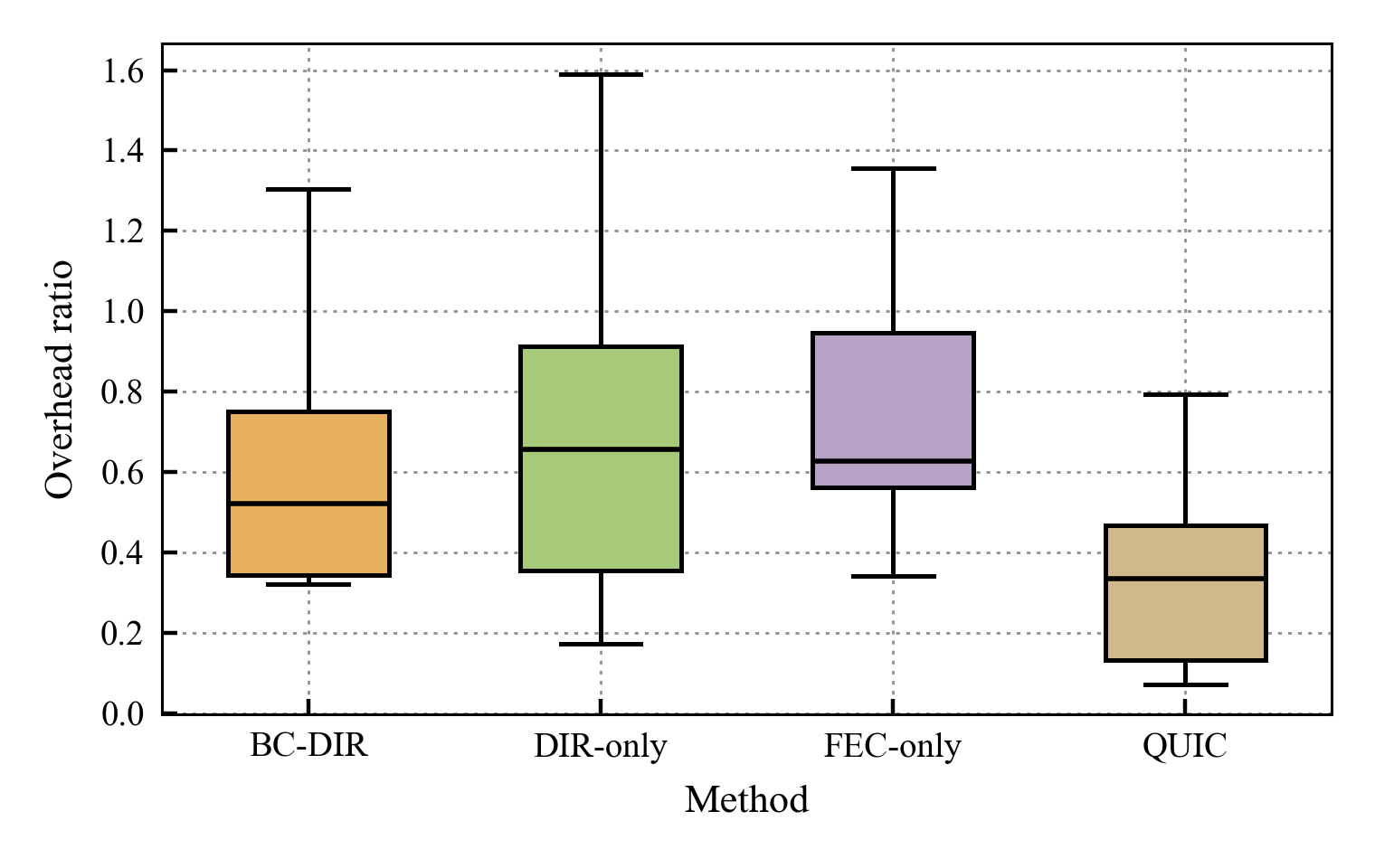}
    \caption{Transmission overhead of QUIC, FEC-only, DIR-only, and BC-DIR in the congested urban V2X scenario (traffic intensity = 600 vehicles/hour).}\label{fig:ablation_overhead}
\end{figure}

\begin{figure*}[!t]
\centering
\subfloat[][300 vehicles/hour]{\includegraphics[width=0.5\columnwidth]{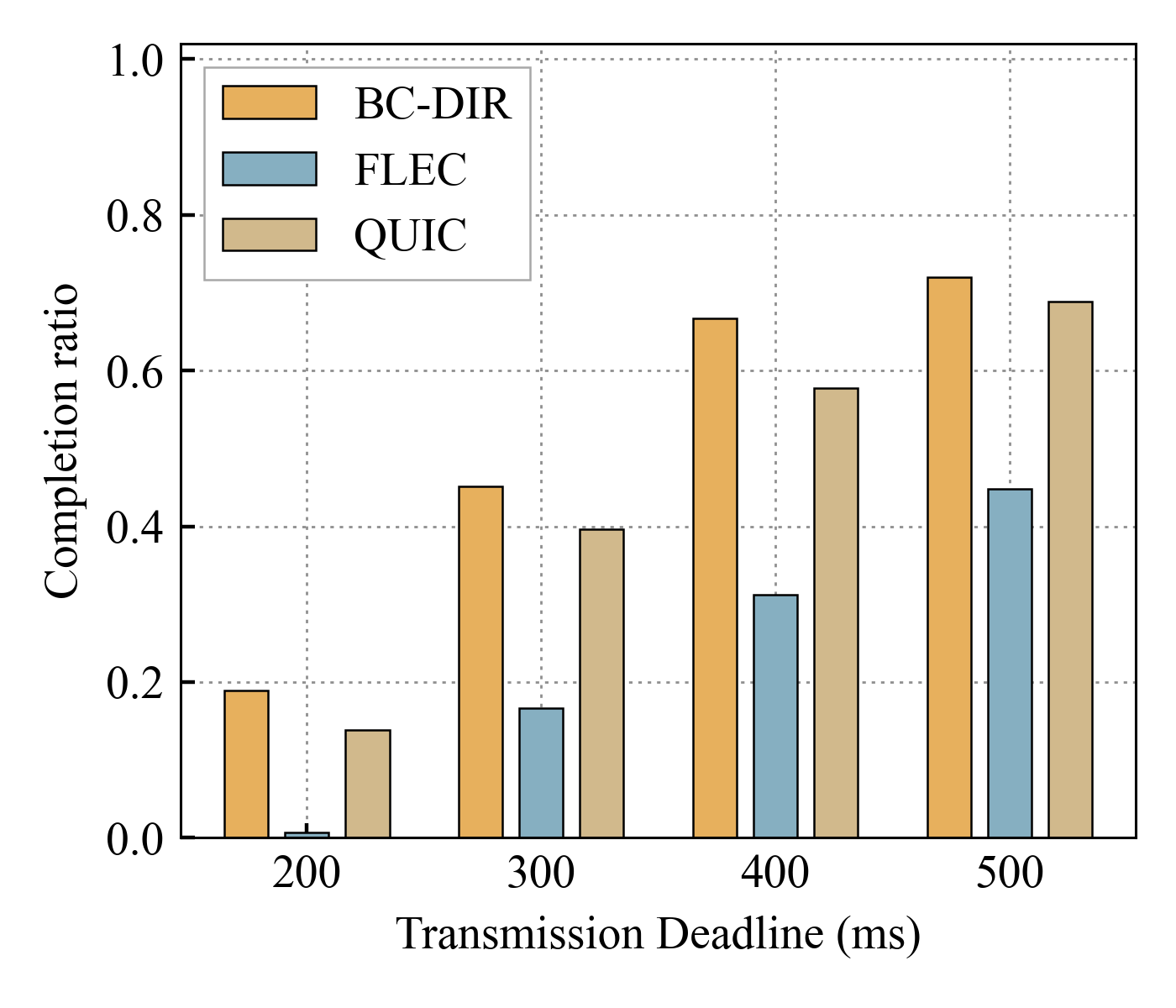}\label{fig:v2x-completion-d1}}
\hfill
\subfloat[][600 vehicles/hour]{\includegraphics[width=0.5\columnwidth]{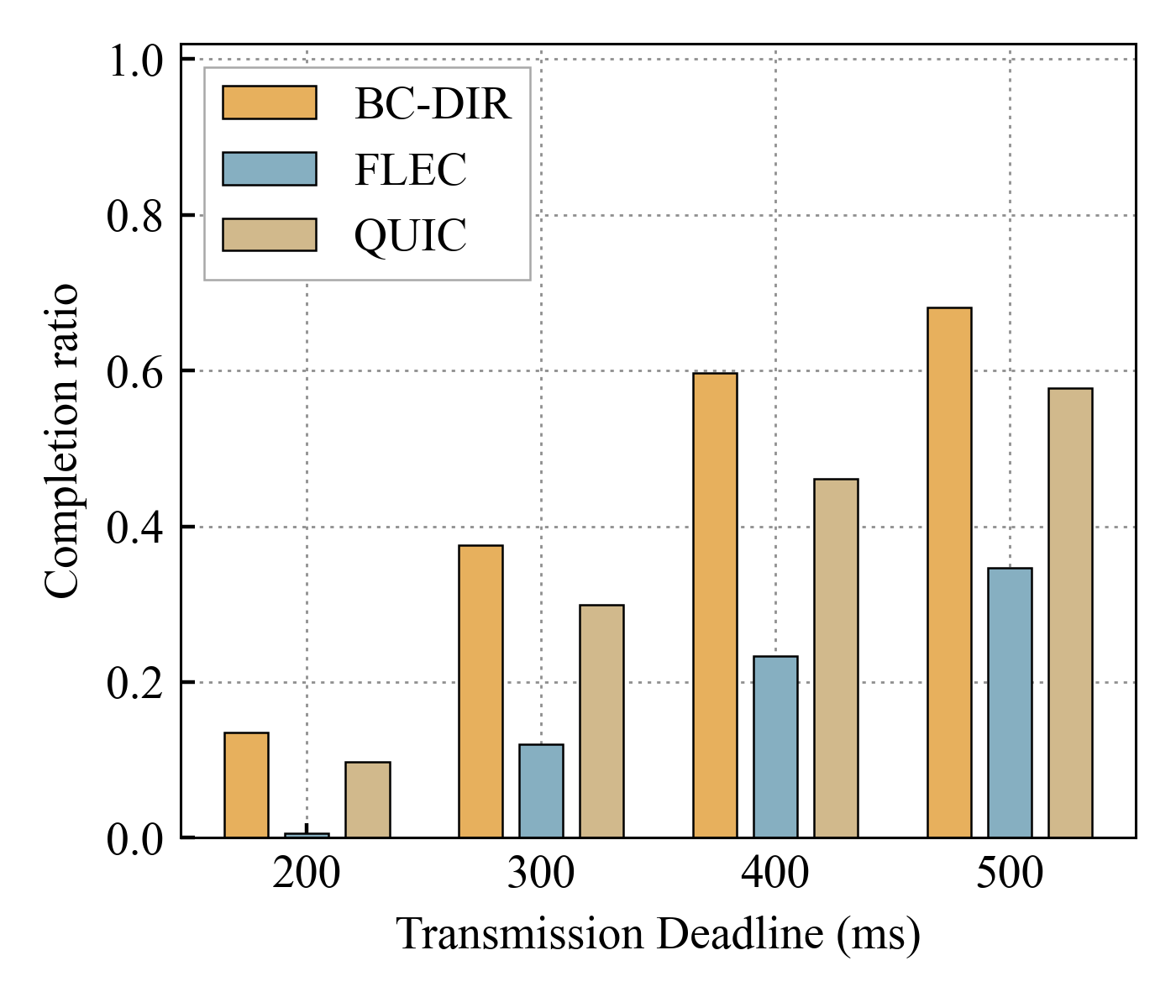}\label{fig:v2x-completion-d2}}
\hfill
\subfloat[][900 vehicles/hour]{\includegraphics[width=0.5\columnwidth]{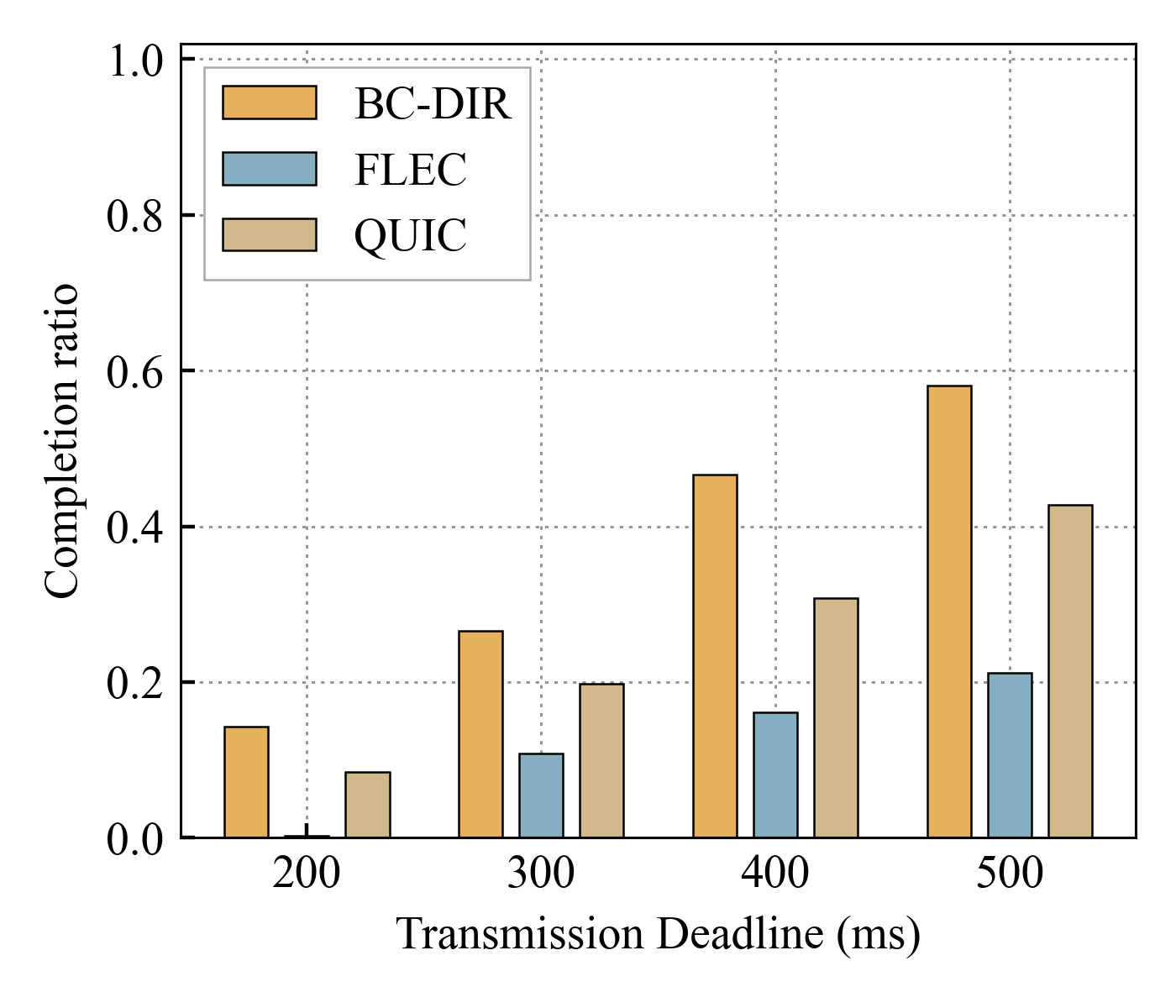}\label{fig:v2x-completion-d3}}
\hfill
\subfloat[][1200 vehicles/hour]{\includegraphics[width=0.5\columnwidth]{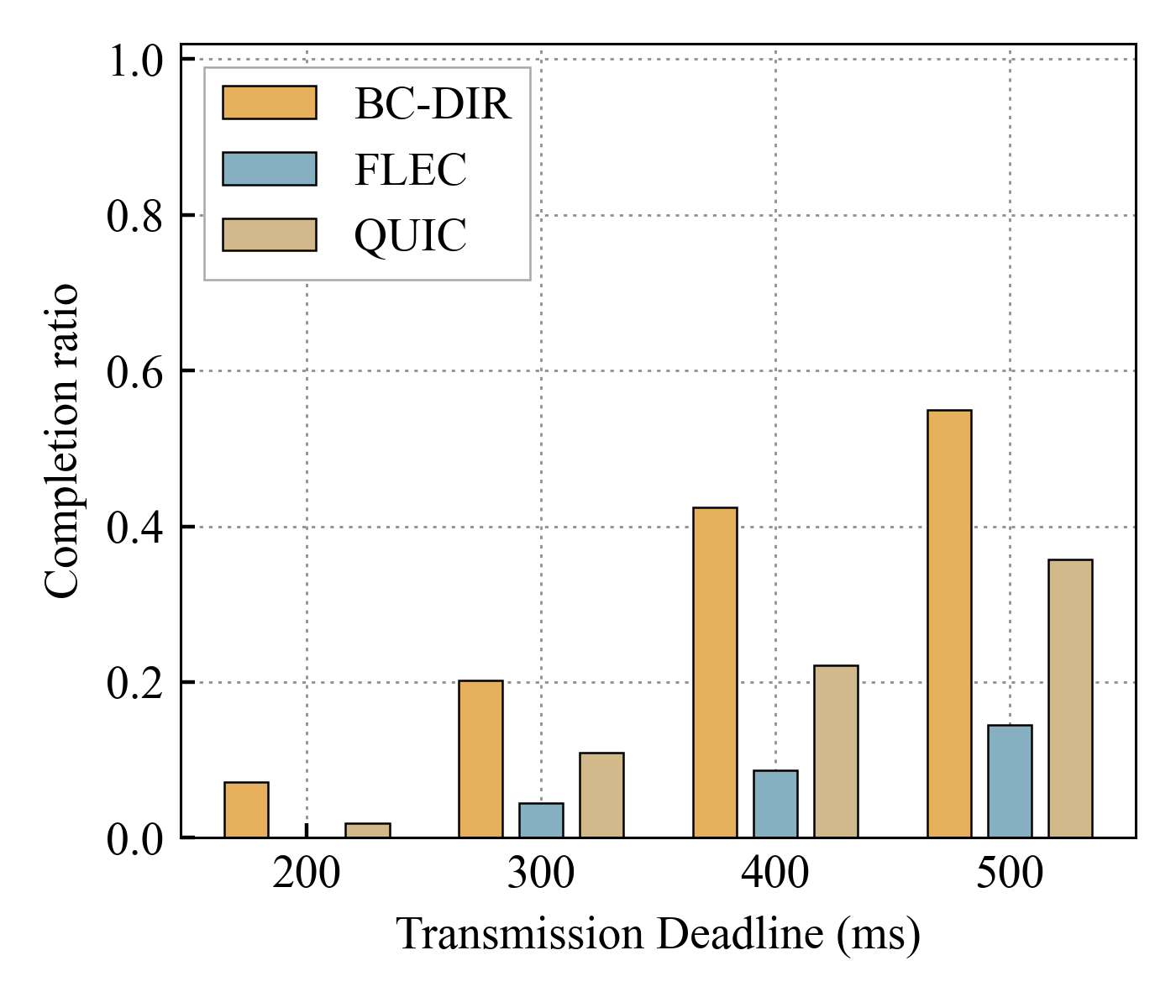}\label{fig:v2x-completion-d4}}
\hfill
\caption{Completion ratio under different transmission deadlines and traffic intensities in congested urban V2X scenarios for 100\,KB messages.}
\hfill
\label{fig:v2x-completion_group}
\end{figure*}

To further understand the contribution of each component in BC-DIR, we conduct an ablation study under the congested urban V2X scenario with traffic intensity fixed at 600 vehicles/hour.

Fig.~\ref{fig:ablation_completion} reports the completion ratio under different transmission deadlines. BC-DIR consistently achieves the highest completion ratio across all deadline settings, showing that the joint use of initial FEC protection and deadline-aware incremental repair is more effective than using either mechanism alone. Under tight deadlines, the gain is particularly clear. For example, at 200~ms, BC-DIR achieves about 0.14 completion ratio, compared with about 0.11 for FEC-only, 0.05 for DIR-only, and 0.10 for QUIC. As the deadline increases, all schemes improve, but BC-DIR maintains the best performance and reaches about 0.68 at 500~ms, outperforming DIR-only (0.60), FEC-only (0.62), and QUIC (0.58). These results indicate that initial FEC and subsequent DIR provide complementary benefits: proactive redundancy improves early decoding, while incremental repair increases the probability of successful recovery when burst loss exceeds the initial protection capability.

Fig.~\ref{fig:ablation_overhead} shows the corresponding overhead distribution. As expected, QUIC has the lowest overhead because it does not proactively inject coding redundancy. However, this low overhead comes at the cost of lower completion ratio. Among the FEC-enabled schemes, BC-DIR exhibits the lowest median overhead, clearly below both FEC-only and DIR-only. This result shows that the complete design can better regulate redundancy usage: instead of relying excessively on either proactive protection or repeated repair rounds, BC-DIR balances the two mechanisms and achieves the most favorable tradeoff between deadline-constrained completion ratio and transmission overhead.

\subsection{Performance in Congested Urban V2X Scenario}

\begin{figure*}[!t]
\centering
\subfloat[][300 vehicles/hour]{\includegraphics[width=0.5\columnwidth]{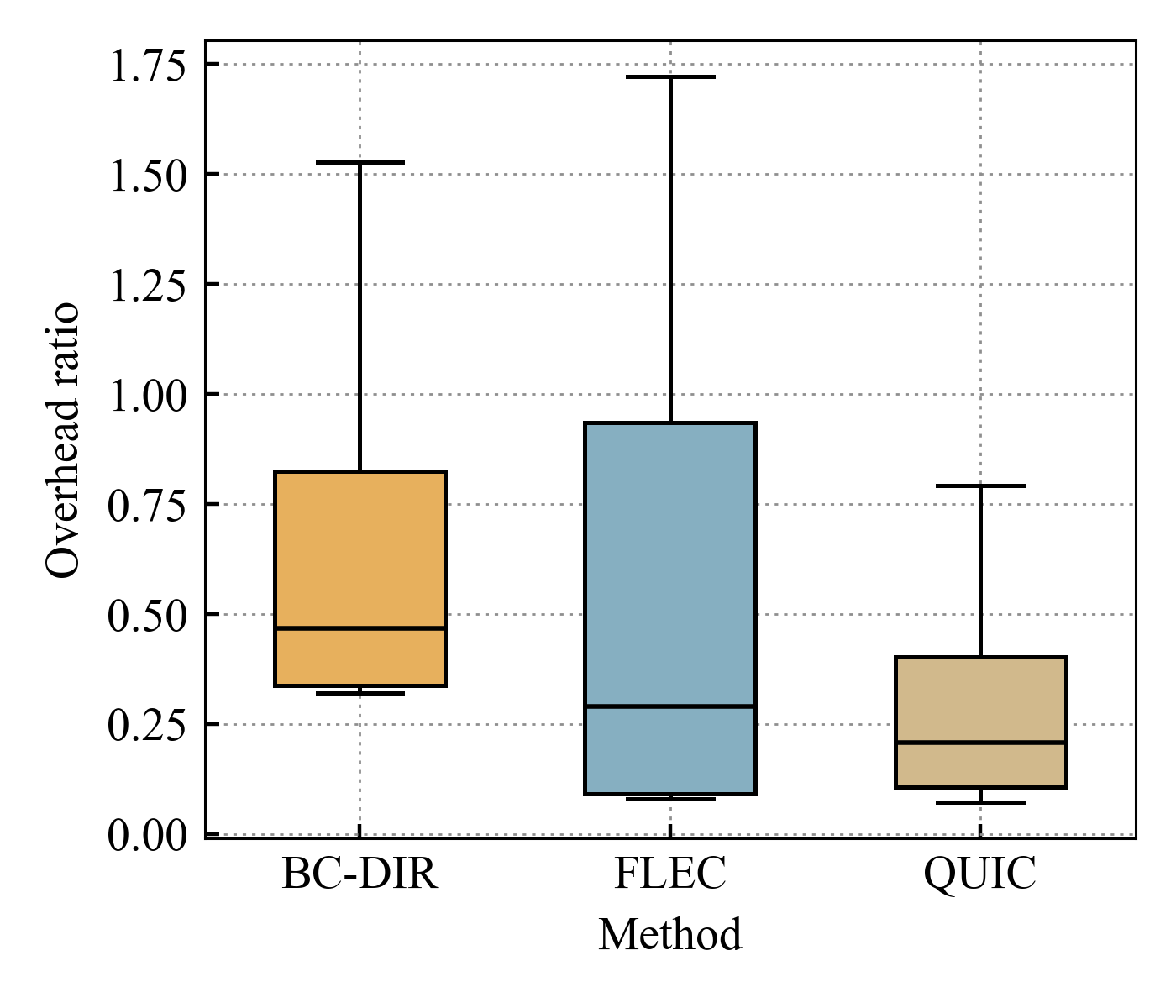}\label{fig:v2x-overhead-d1}}
\hfill
\subfloat[][600 vehicles/hour]{\includegraphics[width=0.5\columnwidth]{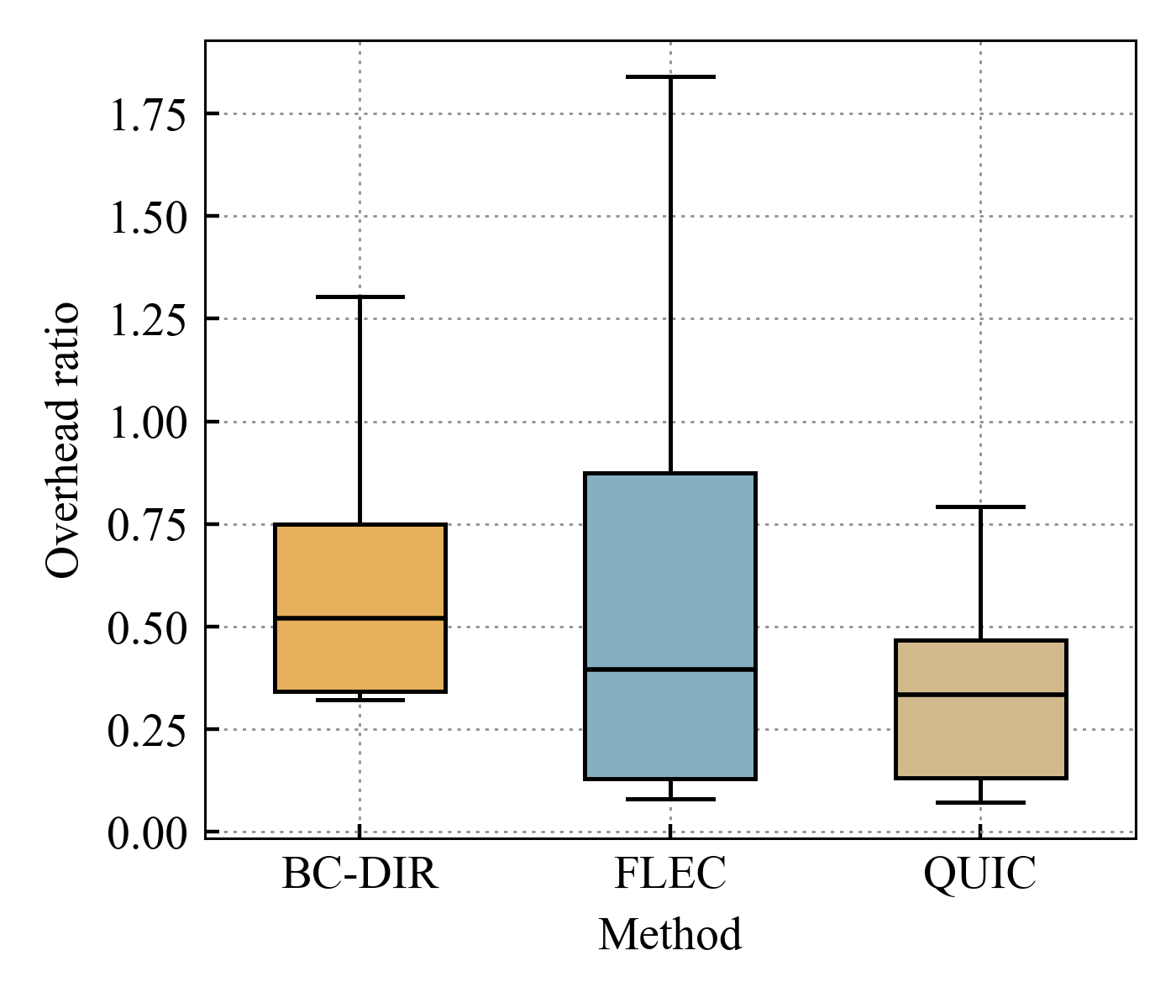}\label{fig:v2x-overhead-d2}}
\hfill
\subfloat[][900 vehicles/hour]{\includegraphics[width=0.5\columnwidth]{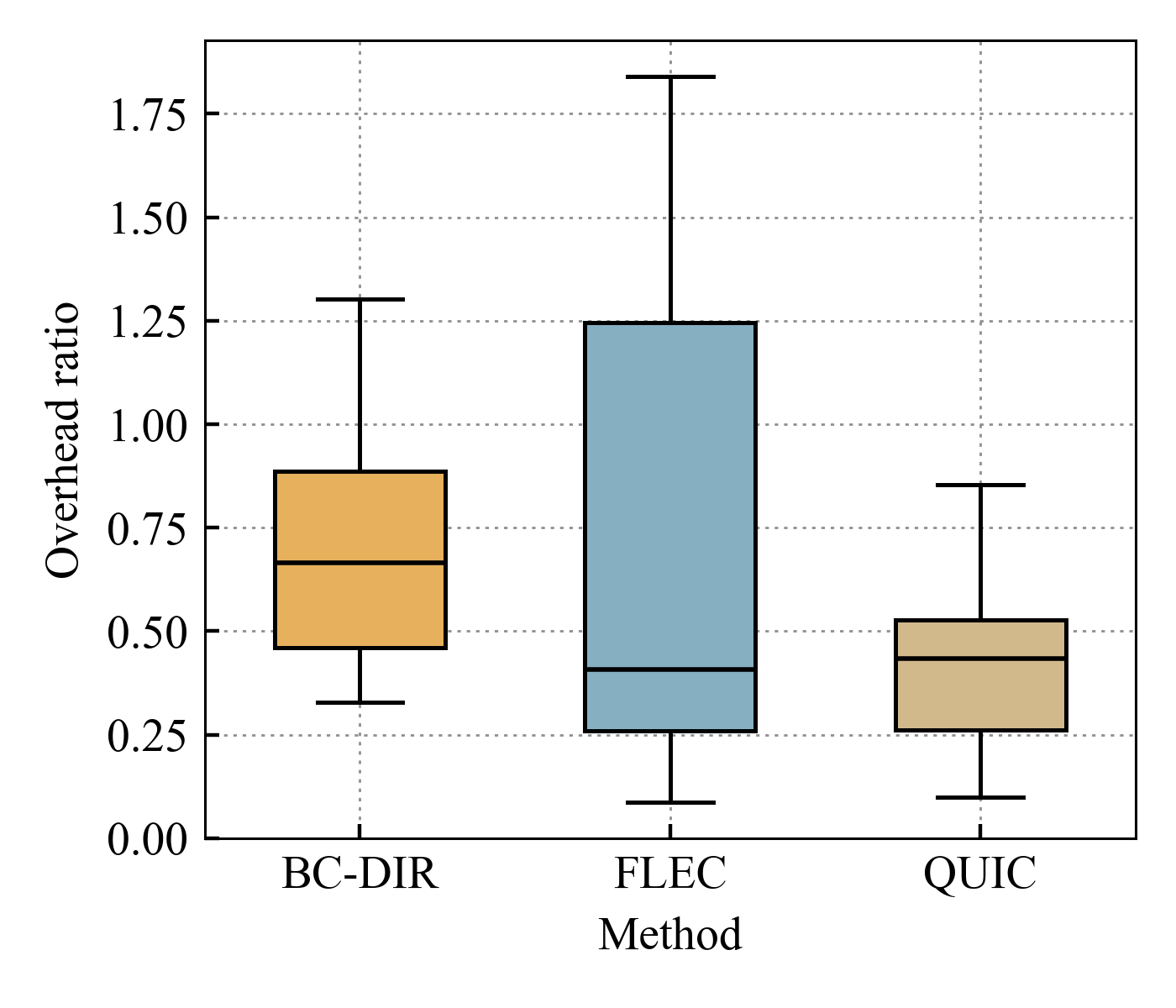}\label{fig:v2x-overhead-d3}}
\hfill
\subfloat[][1200 vehicles/hour]{\includegraphics[width=0.5\columnwidth]{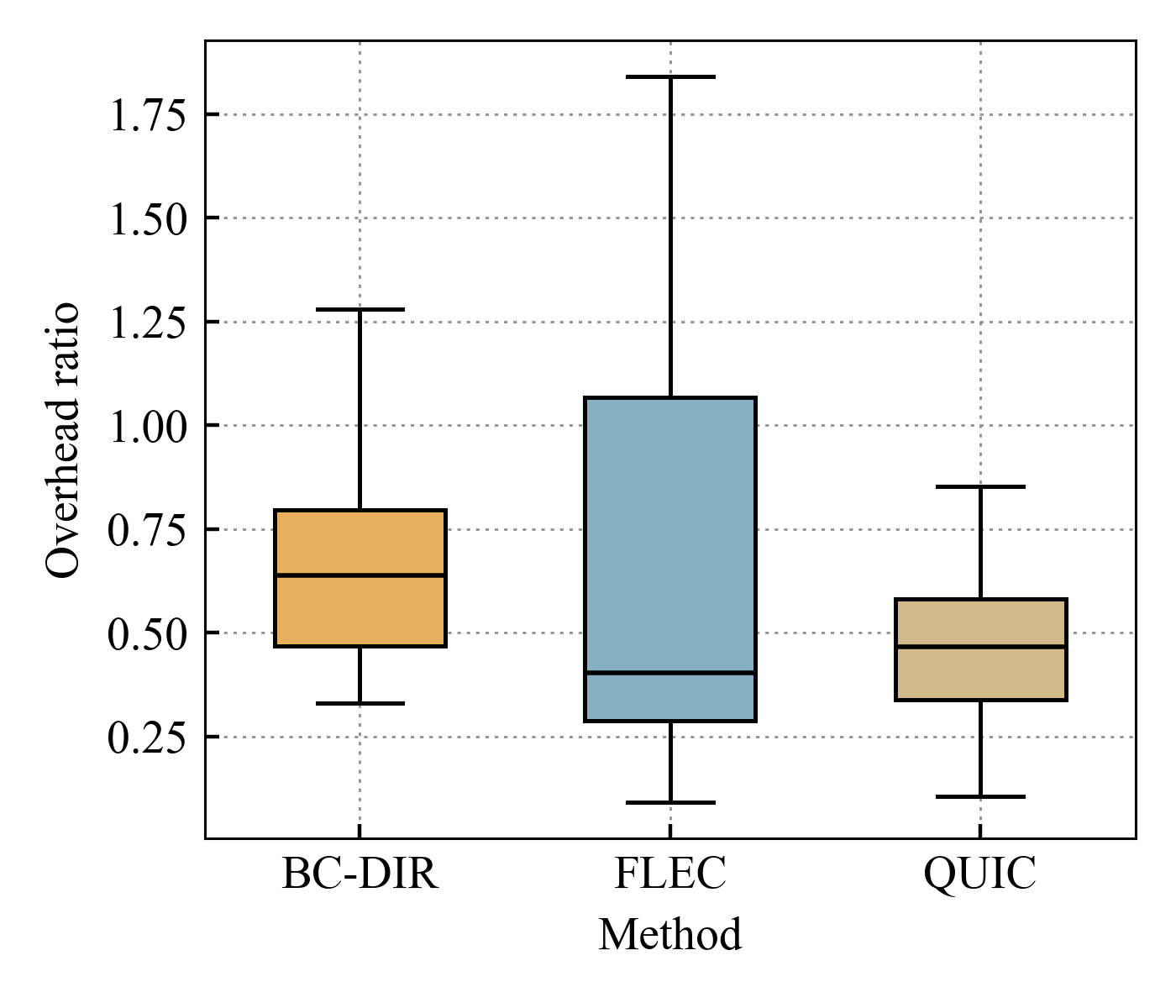}\label{fig:v2x-overhead-d4}}
\hfill
\caption{Transmission overhead ratio under different traffic intensities in congested urban V2X scenarios for 100\,KB messages.}
\label{fig:v2x-overhead_group}
\end{figure*}

We next evaluate BC-DIR under the congested urban V2X scenario, focusing on latency-sensitive medium-sized data transmissions of about 100\,KB, and compare BC-DIR with FLEC and native QUIC under different traffic intensities. For each setting, the simulation is independently repeated 30 times, and the reported results are obtained by aggregating the transmission statistics of all vehicles across these runs.

Fig.~\ref{fig:v2x-completion_group} presents the completion ratio under different transmission deadlines and traffic intensities. In V2X systems, medium-sized perception sharing or cooperative awareness updates typically require delivery within $200$--$500\,\mathrm{ms}$, and shorter deadlines imply stronger latency sensitivity. As expected, all schemes benefit from relaxed deadlines, but their performance diverges significantly as traffic intensity increases and the wireless channel becomes more bursty and congested.

Across all tested settings, BC-DIR consistently achieves the highest completion ratio. Under moderate traffic load (300 vehicles/hour), the gain over native QUIC is already visible, especially at 300--400\,ms deadlines. As traffic intensity increases to 600, 900, and 1200 vehicles/hour, the advantage of BC-DIR becomes increasingly pronounced. For example, at 500\,ms, BC-DIR achieves about 0.68, 0.58, and 0.55 completion ratio under 600, 900, and 1200 vehicles/hour, respectively, whereas FLEC remains below about 0.35, 0.22, and 0.15, and native QUIC reaches about 0.58, 0.43, and 0.36. The gap is even more significant under tighter deadlines, where retransmission delays become more damaging.

Fig.~\ref{fig:v2x-overhead_group} compares the transmission overhead ratio under different traffic intensities. As the traffic load increases from 300 to 1200 vehicles/hour, the overhead of all schemes tends to rise because higher contention and burstier losses require more recovery effort. Across all traffic intensities, QUIC consistently incurs the lowest overhead, since it does not proactively inject coding redundancy. However, this lower overhead comes at the cost of significantly reduced completion ratio under the same conditions.

Compared with QUIC, BC-DIR introduces a moderate overhead increase, with the median overhead remaining roughly 15\%--25\% higher across the tested traffic intensities. This additional cost is substantially lower and more stable than that of FLEC, whose overhead distribution is consistently broader and exhibits much larger upper tails, especially under 900 and 1200 vehicles/hour. In contrast, BC-DIR maintains a more concentrated overhead distribution, indicating that its redundancy adaptation is more stable under traffic-induced burst loss.

When considered together with the completion-ratio results in Fig.~\ref{fig:v2x-completion_group}, these results show that BC-DIR achieves a more favorable reliability-overhead trade-off than both benchmark methods. Relative to QUIC, the moderate increase in overhead yields a substantial improvement in deadline-constrained completion probability. Relative to FLEC, BC-DIR achieves higher completion ratio with a more controlled and stable redundancy cost, which is particularly important in highly congested V2X environments.

\begin{figure}[t]
\centering
\includegraphics[width=0.9\columnwidth]{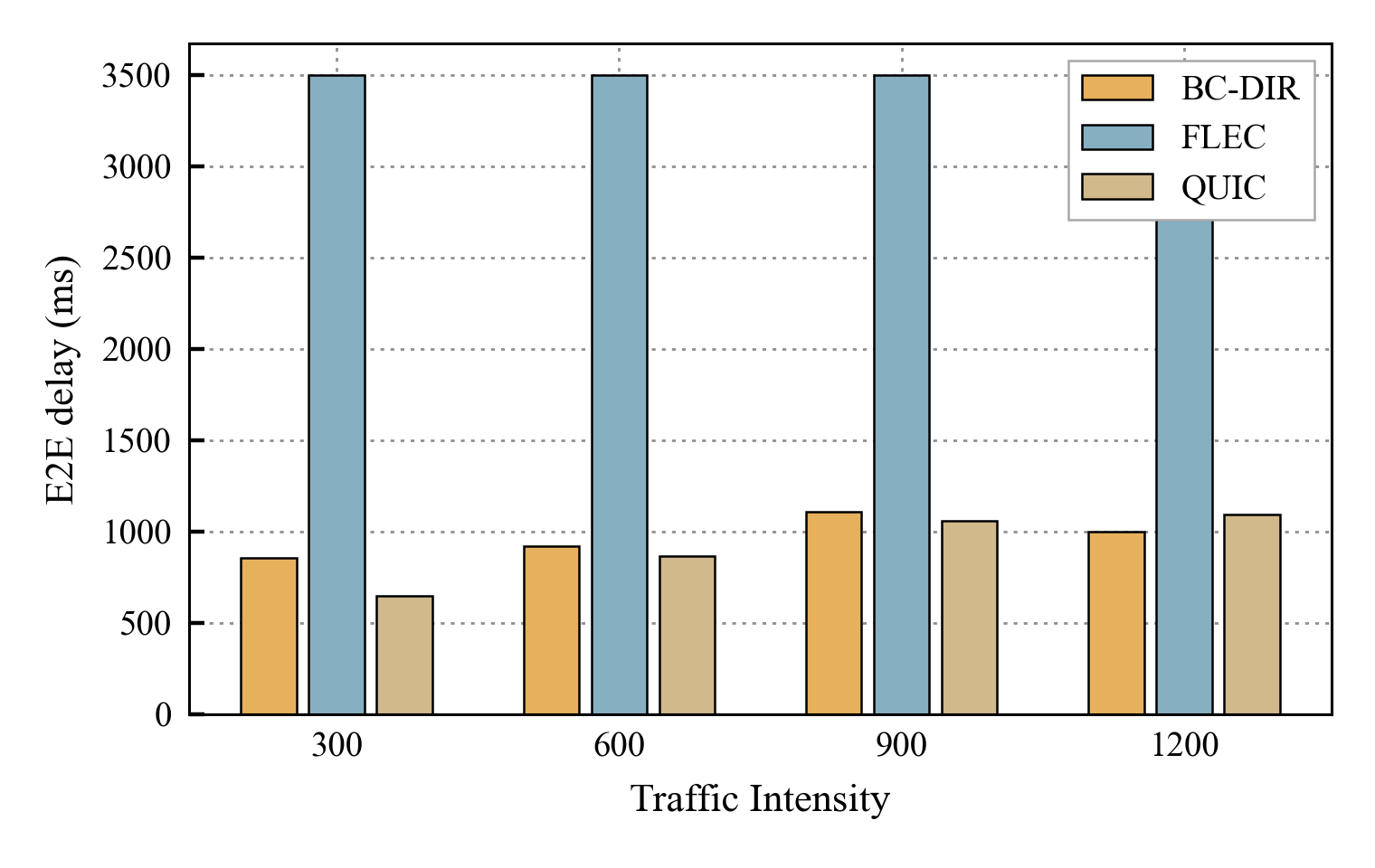}
\caption{95th-percentile end-to-end delay under different traffic intensities in congested urban V2X scenarios for 100\,KB messages.}
\label{fig:v2x_worst}
\end{figure}

To further examine worst-case performance, Fig.~\ref{fig:v2x_worst} shows the worst-case end-to-end delay under different traffic intensities. BC-DIR consistently maintains substantially lower tail delay than FLEC across all traffic loads. In particular, the worst-case delay of FLEC remains around 3500\,ms in all settings, indicating severe instability once burst losses and traffic contention accumulate. In contrast, BC-DIR keeps the worst-case delay below about 1.1\,s, even at 900 and 1200 vehicles/hour. Compared with native QUIC, BC-DIR achieves lower worst-case delay at 300 and 600 vehicles/hour, and remains competitive under heavier traffic loads. 

\begin{figure}[t]
\centering
\includegraphics[width=0.85\columnwidth]{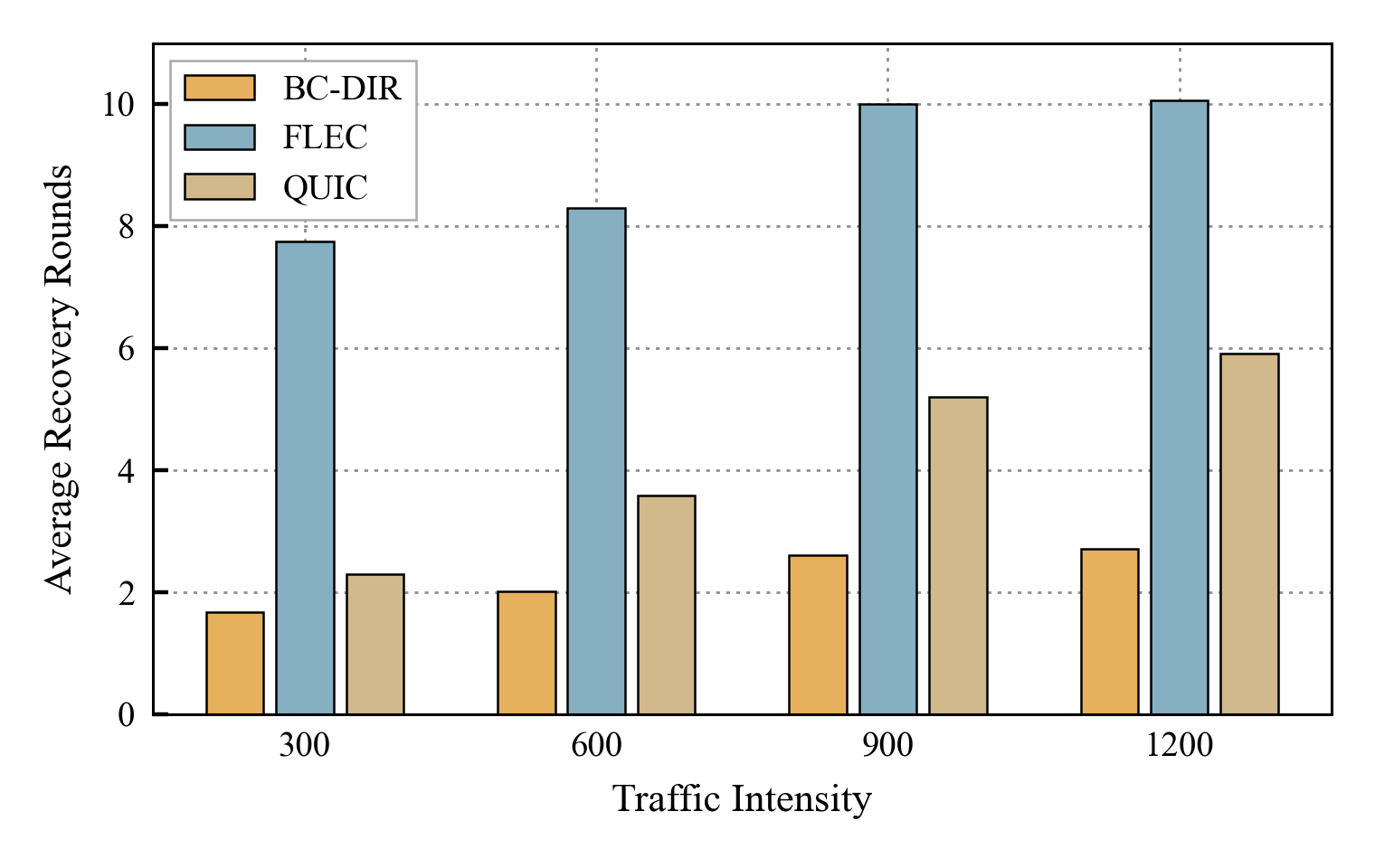}
\caption{Average recovery rounds under different traffic intensities in congested urban V2X scenarios for 100\,KB messages.}
\label{fig:v2x_retx}
\end{figure}

Fig.~\ref{fig:v2x_retx} reports the average number of recovery rounds under different traffic intensities. BC-DIR consistently requires the fewest recovery rounds across all traffic loads, and the advantage becomes more pronounced as traffic intensity increases. Compared with native QUIC, BC-DIR uses less than half as many recovery rounds on average, and only about one quarter as many as FLEC. Even at the highest traffic intensity of 1200 vehicles/hour, BC-DIR keeps the average number of recovery rounds below 3, whereas native QUIC rises to nearly 6 and FLEC exceeds 10. This result highlights the key benefit of the proposed repair mechanism: by injecting additional repair symbols in each round, BC-DIR increases the probability of successful recovery per round and avoids the repeated feedback cycles that dominate conventional retransmission-based recovery under burst loss.

\subsection{Performance Evaluation under Stable Network Conditions}

Even under predominantly line-of-sight and favorable communication conditions, small IID packet losses ($0.1\%\sim0.5\%$) are commonly observed in practical V2X systems. In this scenario, all schemes achieve $100\%$ completion ratio for both $100\,\mathrm{KB}$ and $1\,\mathrm{MB}$ transmissions.  As a result, reliability is no longer the primary differentiating factor. The key performance metric becomes transmission efficiency, which is reflected by redundancy overhead.

\begin{figure}[t]
\centering
\includegraphics[width=0.85\columnwidth]{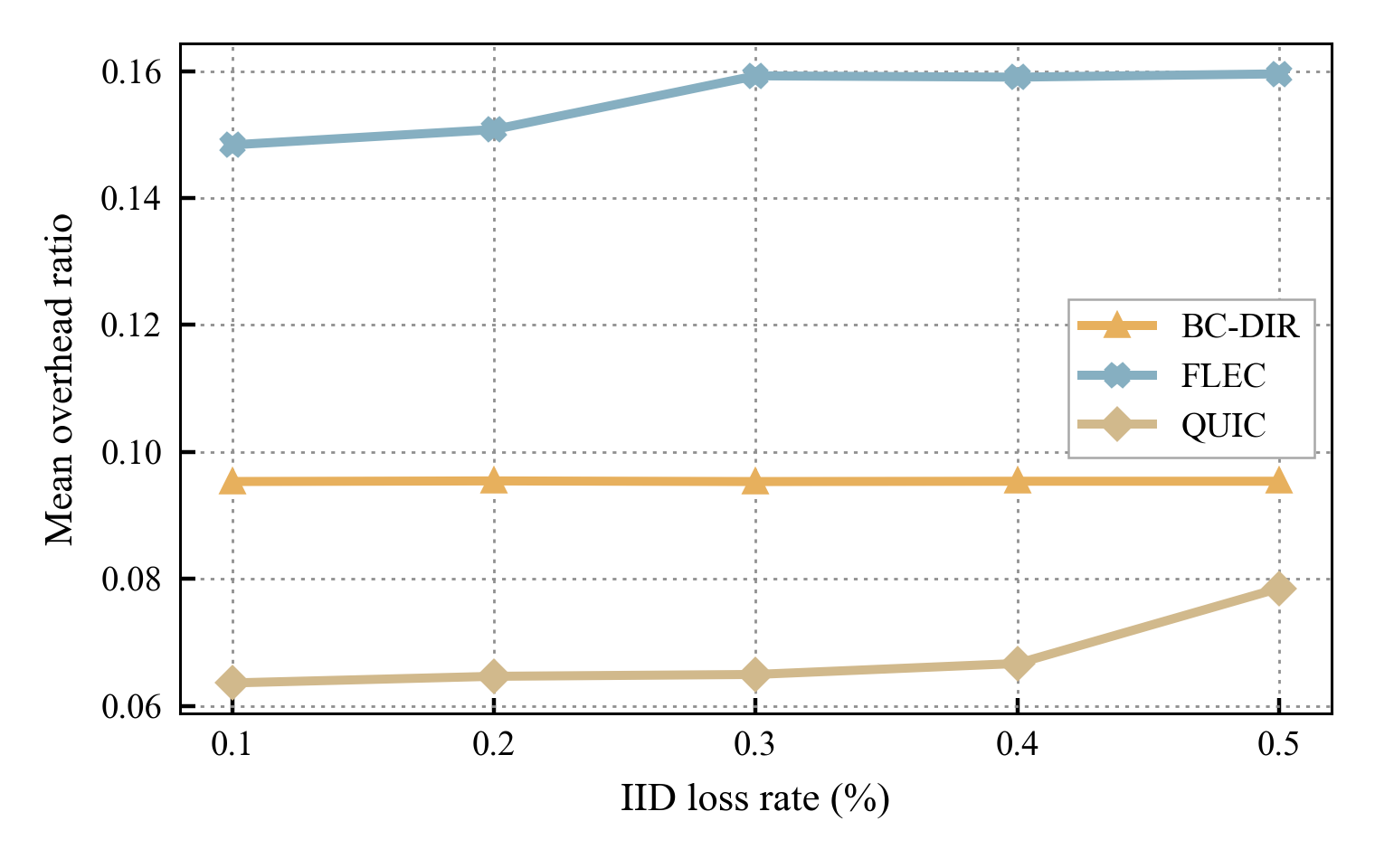}
\caption{Redundancy overhead for 100 KB objects under IID loss conditions.}
\label{fig:iid_100k_overhead}
\end{figure}

For $100\,\mathrm{KB}$ objects, the transfer typically completes within one to two RTTs, leaving limited opportunity for predictive redundancy to amortize across multiple repair rounds. As shown in Fig.~\ref{fig:iid_100k_overhead}, native QUIC incurs the lowest overhead, while BC-DIR increases the overhead by only about $3\%$ to provide robustness against residual random losses. In contrast, FLEC introduces a much larger overhead increase of approximately $8\%$ over QUIC across the tested i.i.d. loss range. This result indicates that for short transfers under stable channel conditions, the benefit of incremental redundancy is inherently limited, but BC-DIR is still able to maintain a substantially lower redundancy cost than existing FEC-based alternatives.

\begin{figure}[t]
\centering
\includegraphics[width=0.85\columnwidth]{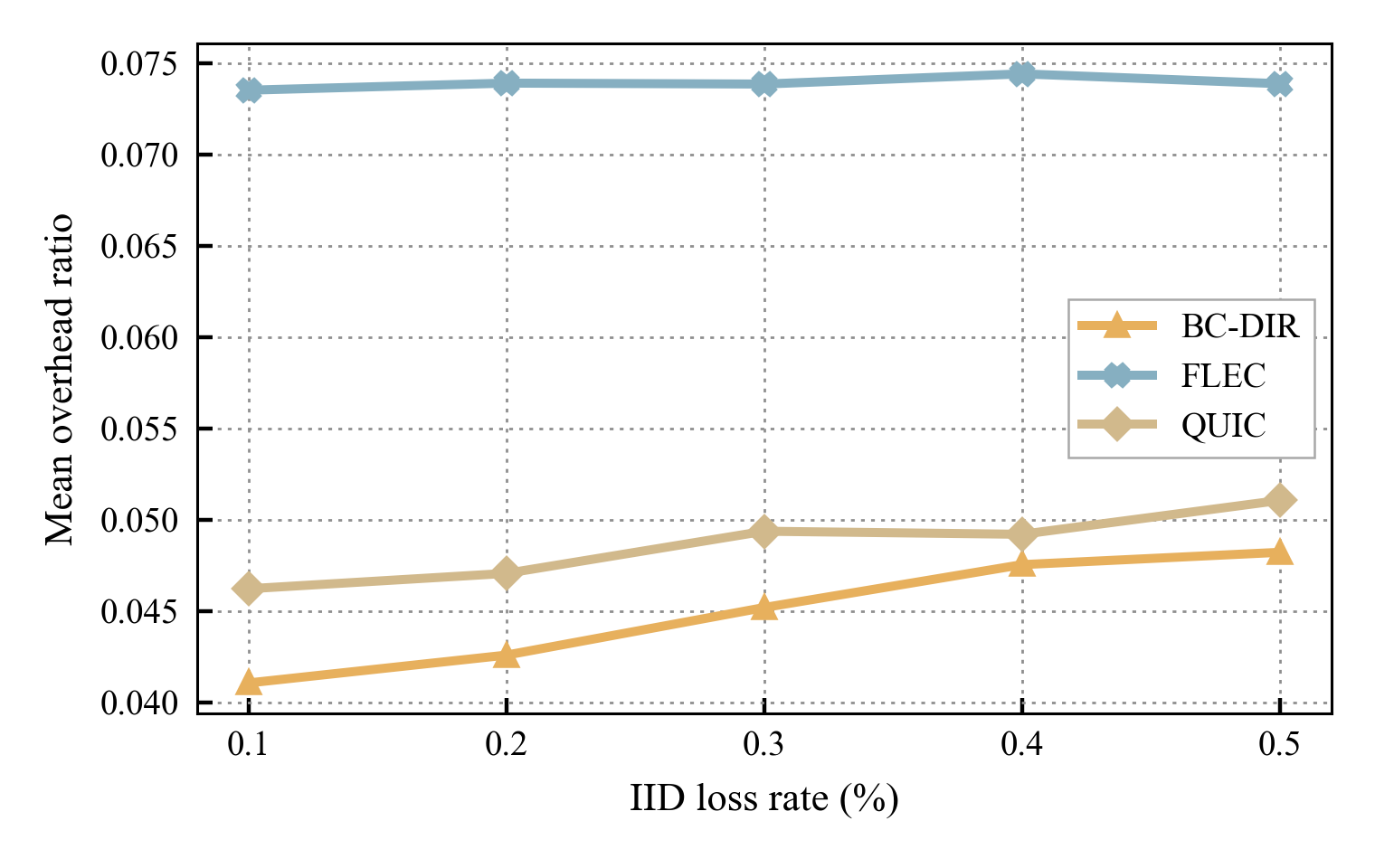}
\caption{Redundancy overhead for 1 MB objects under IID loss conditions.}
\label{fig:iid_1mb_overhead}
\end{figure}

For $1\,\mathrm{MB}$ objects, the transfer spans multiple RTT cycles, making cumulative recovery behavior more significant. In this case, BC-DIR leverages feedback across rounds to regulate redundancy injection more effectively. As shown in Fig.~\ref{fig:iid_1mb_overhead}, BC-DIR not only avoids introducing extra overhead relative to native QUIC, but in fact reduces the total overhead by about $1\%$. This gain comes from proactively transmitting an appropriate amount of repair symbols, which reduces repeated NACK-triggered retransmissions and the associated control overhead over long transfers. In contrast, FLEC maintains a consistently higher redundancy level than BC-DIR across the tested i.i.d. loss range.

Overall, these results show that under stable wireless conditions, the advantage of BC-DIR lies not in improving completion ratio, but in maintaining transmission efficiency. For short transfers, the benefit of adaptive redundancy remains limited because the transmission duration is too short to amortize coding overhead. For long transfers, however, BC-DIR achieves a more favorable redundancy-efficiency tradeoff by suppressing repeated repair signaling and regulating redundancy more precisely across rounds.
 
\section{Conclusion and Future Work}
\label{sec:conclusion}
This paper presented BC-DIR, a Bandit-Controlled Deadline-Aware Incremental Redundancy framework for QUIC-based V2X transport. BC-DIR extends conventional retransmission into a deadline-aware repair process by transmitting additional repair symbols after decoding failure, rather than retransmitting the original lost packets. By combining rateless coding, soft decoding deadlines, and online redundancy adaptation, the proposed design improves the probability that messages are completed within their delivery deadlines under burst-loss-dominated V2X conditions.

In addition to the protocol design, we developed a deadline-constrained reliability model that characterizes the benefit of incremental redundancy over conventional ARQ and establishes the existence of an optimal redundancy margin. The analytical results were further validated by Monte Carlo simulations under Gilbert--Elliott burst-loss channels. We also implemented BC-DIR in a QUIC-based transport stack and evaluated it in realistic vehicular simulations. Experimental results showed that BC-DIR consistently improves deadline-constrained completion probability in congested urban scenarios while maintaining competitive overhead under stable network conditions.

Several directions remain for future work. First, the current design focuses on single-path QUIC transport, while multi-path V2X communication may provide additional diversity for deadline-constrained recovery. Second, the present analytical model characterizes block-level recovery behavior; extending it toward a tighter message-level completion analysis would further strengthen the theoretical foundation. Finally, integrating richer cross-layer context, such as channel quality indicators or MAC-level congestion signals, may further improve the efficiency of online redundancy adaptation.



\ifCLASSOPTIONcaptionsoff
  \newpage
\fi

\bibliographystyle{IEEEtran}
\bibliography{bibtex/bib/IEEEabrv,bibliography}




\end{document}